\newcommand{\seqnum}[1]{\href{http://oeis.org/#1}{\underline{#1}}}
\newcommand{\SMQ}{\mathrm{SMQ}}
\newcommand{\PMQ}{\mathrm{PMQ}}
\newcommand{\FindMax}{\mathrm{FindMax}}
\newcommand{\Update}{\mathrm{Update}}
\newcommand{\changed}[1]{{#1}}
\newcommand{\remove}[1]{{}}
\definecolor{mygray}{gray}{0.3}
\newcommand{\etal}{{et al.}}
\newcommand{\inc}{\mathrm{inc}}
\newcommand{\dec}{\mathrm{dec}}
\newcommand{\pred}[2]{\mathrm{pred}(#1,#2)}
\title{Rollercoasters and Caterpillars
}
\author{Therese Biedl\thanks{Cheriton School of Computer Science, University of Waterloo, 
	}
	\and Ahmad Biniaz\footnotemark[1]
	\and Robert Cummings\footnotemark[1]
	\and Anna Lubiw\footnotemark[1]
	\and Florin Manea\thanks{Department of Computer Science, Kiel University.} 
	\and Dirk Nowotka\footnotemark[2] 
	\and Jeffrey Shallit\footnotemark[1]}
\date{\today}
\newtheorem{lemma}{Lemma}
\newtheorem{proposition}{Proposition}
\newtheorem{conjecture}{Conjecture}
\newtheorem{theorem}{Theorem}
\newtheorem{observation}{Observation}
\newtheorem*{problem*}{Problem}
\newtheorem*{invariant*}{Invariant}
\begin{document}
	\maketitle
	\begin{abstract}
		A {\em rollercoaster} is a sequence of real numbers for which every maximal contiguous subsequence, that is increasing or decreasing, has length at least three. By translating this sequence to a set of points in the plane, a rollercoaster can be defined as a polygonal path for which every maximal sub-path, with positive- or negative-slope edges, has at least three points. Given a sequence of distinct real numbers, the rollercoaster problem asks for a maximum-length (not necessarily contiguous) subsequence that is a rollercoaster. It was conjectured that every sequence of $n$ distinct real numbers contains a rollercoaster of length at least $\lceil n/2\rceil$ for $n>7$, while the best known lower bound is $\Omega(n/\log n)$. In this paper we prove this conjecture. Our proof is constructive and implies a linear-time algorithm for computing a rollercoaster of this length. Extending the $O(n\log n)$-time algorithm for computing a longest increasing subsequence, we show how to compute a maximum-length rollercoaster within the same time bound. A maximum-length rollercoaster in a permutation of $\{1,\dots,n\}$ can be computed in $O(n \log \log n)$ time.

		The search for rollercoasters was motivated by orthogeodesic point-set embedding of caterpillars. A {\em caterpillar} is a tree such that deleting the leaves gives a path, called the {\em spine}. A {\em top-view caterpillar} is one of degree 4 such that the two leaves adjacent to each vertex lie on opposite sides of
		the spine. As an application of our result on rollercoasters, we are able to find a planar drawing of every $n$-node top-view caterpillar on every set of $\frac{25}{3}n$ points in the plane, such that each edge is an orthogonal path with one bend. This improves the previous best known upper bound on the number of required points, which is $O(n \log n)$. We also show that such a drawing can be obtained in linear time, provided that the points are given in sorted order.
	\end{abstract}
	
	\section{Introduction}
	A {\em run} in a sequence of real numbers is a maximal contiguous subsequence that is increasing or decreasing. A {\em rollercoaster} is a sequence of real numbers such that every run has length at least three. For example the sequence $(8,5,1,3,4,7,6,2)$ is
	a rollercoaster with runs $(8,5,1)$, $(1,3,4,7)$, $(7,6,2)$, which have lengths $3$, $4$, $3$, respectively. The sequence $(8,5,1,7,6,2,3,4)$ is not a rollercoaster because its run $(1,7)$ has length 2. 
	Given a sequence $S=(s_1,s_2,\dots,s_n)$ of $n$ distinct real numbers, the rollercoaster problem is to find a maximum-size set of indices $i_1<i_2<\dots <i_k$ such that $(s_{i_1},s_{i_2}, \dots,s_{i_k})$ is a rollercoaster. In other words, this problem asks for a longest rollercoaster in $S$, i.e., a longest subsequence of $S$ that is a rollercoaster.

	One can interpret $S$ as a set $P$ of points in the plane by translating each number $s_i\in S$ to a point $p_i=(i,s_i)$. With this translation, a rollercoaster in $S$ translates to a ``rollercoaster'' in $P$, which is a polygonal path whose vertices are points of $P$ and such that every maximal sub-path, with positive- or negative-slope edges, has at least three points. See Figure~\ref{map-fig}(a). Conversely, for any point set in the plane, the $y$-coordinates of the points, ordered by their $x$-coordinates, forms a sequence of numbers. Therefore, any rollercoaster in $P$ translates to a rollercoaster of the same length in $S$.
	
	\begin{figure}[htb]
		\centering
		\setlength{\tabcolsep}{0in}
		$\begin{tabular}{cc}
		\multicolumn{1}{m{.56\columnwidth}}{\centering\includegraphics[width=.48\columnwidth]{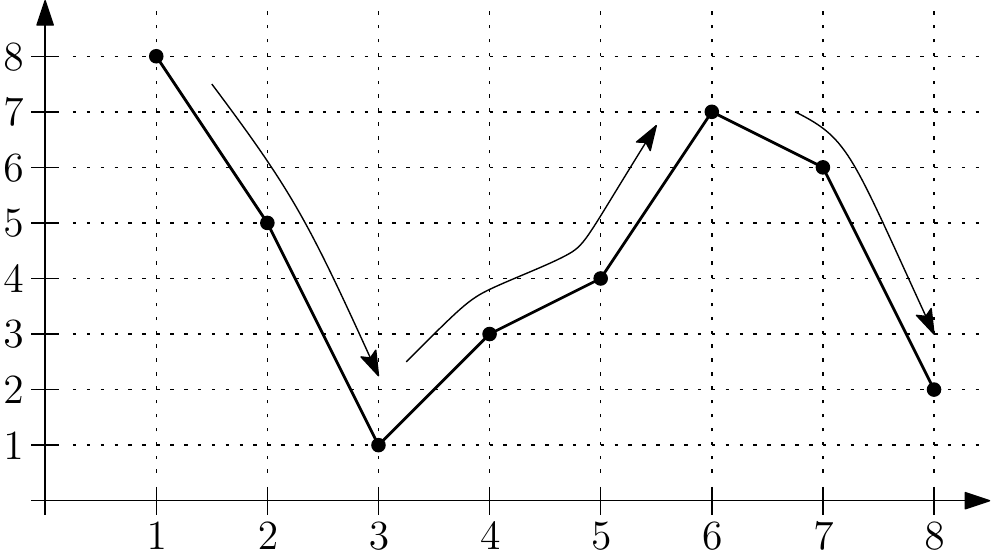}}
		&\multicolumn{1}{m{.44\columnwidth}}{\centering\includegraphics[width=.33\columnwidth]{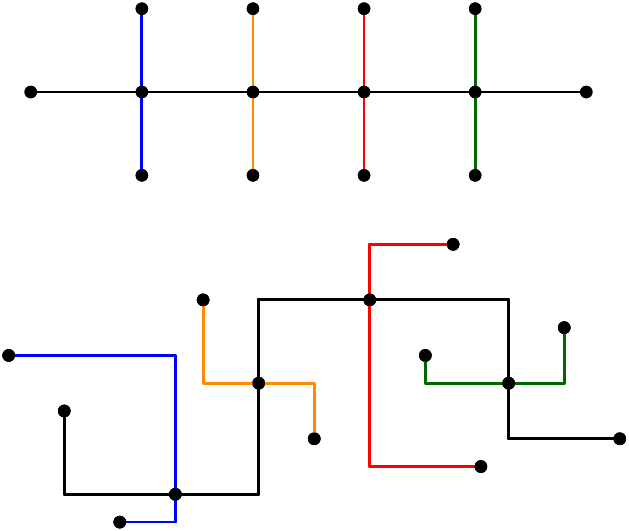}}\\
		(a) & (b)
		\end{tabular}$
		\caption{(a) Translating the sequence $(8,5,1,3,4,7,6,2)$ to a set of points. (b) A planar L-shaped drawing of a top-view caterpillar.}
		\label{map-fig}
	\end{figure}
	
	The best known lower bound on the length of a longest rollercoaster is $\Omega(n/\log n)$ due to Biedl~\etal~\cite{Biedl2017}. They conjectured that 
	
	\begin{conjecture}
		\label{conj1}
		Every sequence of $n>7$ distinct real numbers contains a rollercoaster of length at least $\lceil n/2 \rceil$.
	\end{conjecture}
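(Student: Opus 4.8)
The plan is to give a constructive, left-to-right argument that builds a rollercoaster while discarding at most half of the points, proceeding by strong induction on $n$ (with the small values $n\le 7$ checked by hand, which is exactly where the clean bound $\lceil n/2\rceil$ can fail and so forces the hypothesis $n>7$). First I would reformulate the object combinatorially: write the sequence as its maximal monotone runs $R_1,\dots,R_t$, which alternate in direction and share their turning points, and record the sign string $\sigma\in\{+,-\}^{\,n-1}$ of consecutive differences. A subsequence is a rollercoaster precisely when every maximal block of its induced sign string has length at least $2$ (equivalently, every run has at least three points). The two easy observations I would isolate are that any single monotone run is already a rollercoaster, so a longest increasing or decreasing subsequence is a ``free'' rollercoaster, and that the only obstruction to $S$ itself being a rollercoaster is the presence of \emph{short} runs, i.e.\ runs consisting of a single edge (length-$2$ runs, corresponding to length-$1$ blocks of $\sigma$).

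The inductive step would peel a short prefix of the run decomposition and charge discarded points to kept ones. In the good case the first run $R_1$ already has at least three points: I keep all of it up to and including its turning point, and recurse on the suffix that starts at that turning point, so the kept rollercoaster of the suffix splices onto $R_1$ with the shared turn respecting the ``at least two edges'' condition. In the bad case $R_1$ is short (a single edge), and a block of consecutive short runs forms a local zigzag; here I would \emph{absorb} the zigzag by deleting one endpoint of each offending short run, which merges its two neighbouring runs and lengthens them, and I would pay for each deleted point with a retained neighbour so that at least half of every peeled block survives. Matching the parity of the peeled block length to $\lceil n/2\rceil$, and splicing at the boundary while preserving the turn-validity of the junction point, is the bookkeeping that makes the count come out to exactly half rather than to a smaller constant.

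I expect the main obstacle to be precisely this last point: turning a constant-factor survival guarantee into the tight factor $\tfrac12$. The naive charging of one discarded point per kept point is the right target, but the short/long case split interacts with the inductive boundary — a point kept as the final turn of the prefix must also satisfy the run-length constraint as the first turn of the recursively built suffix, and a careless merge can force a second deletion and blow the budget. Controlling this requires strengthening the induction hypothesis to track the \emph{direction and remaining slack} of the run at the splice, so that each recursive call inherits a junction it can always honour with no extra loss. Once the invariant is set up, each step touches only $O(1)$ points and makes an irrevocable local decision, so the construction runs in linear time, matching the algorithmic claim; the remaining work is the finite verification of the base cases $n\le 7$ that pins down the threshold in Conjecture~\ref{conj1}.
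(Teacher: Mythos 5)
Your proposal has a genuine gap, and in fact two of its load-bearing steps fail. First, the inductive splice: you propose to keep the first long run up to its turning point and recurse on the suffix, strengthening the hypothesis so that the suffix yields a suitable rollercoaster that \emph{starts at that turning point with a prescribed direction}. No such strengthening can hold: the paper exhibits point sets (Figure~\ref{first-and-last-fig}(b)) in which \emph{every} rollercoaster of length at least $n/4+3$ contains neither the first nor the last point, so the hypothesis ``there is a rollercoaster of length $\lceil m/2\rceil$ anchored at the first point of the subsequence'' is simply false, and the junction bookkeeping you flag as the main obstacle cannot be repaired within a single-rollercoaster induction. Second, the bad-case charging: deleting one endpoint of a short run does not in general merge its two neighbouring runs. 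Take the Erd\H{o}s--Szekeres-type permutation $13,9,5,1,14,10,6,2,15,11,7,3,16,12,8,4$, whose runs alternate between length-$4$ descents and length-$2$ ascents; to merge two consecutive descents across a short ascent you must delete nearly all of one of them, blowing the one-for-one budget. The long rollercoaster that does exist here, e.g.\ $13,9,5,1,14,15,16,12,8,4$, is assembled from points scattered across several original runs, which a run-preserving peeling of $S$'s own run decomposition can never produce.

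The paper's proof avoids both problems with a different mechanism: it sweeps left to right maintaining \emph{two} pseudo-rollercoasters (the length-$3$ condition is waived for the first run) whose \emph{union} covers all swept points, one ending in an ascent and one in a descent, with their last runs sharing a point. When neither can be extended directly, it sweeps until the new points first contain a 3-descent; by Dilworth's theorem the points before that moment split into two ascending chains, which are distributed between the two structures so that both invariants are restored. The factor $\tfrac12$ then comes for free from $|R_1|+|R_2|\geq n$ rather than from any charging argument, and a final refinement (dropping a length-$2$ first run and a case analysis in Appendix~\ref{appA}) converts the pseudo-rollercoasters into genuine ones without losing the bound. If you want to salvage your plan, the idea you are missing is precisely this ``build two objects that jointly cover everything'' device; a single greedy structure with local charging is exactly the kind of argument that previously only gave $\Omega(n/\log n)$.
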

	
	Conjecture~\ref{conj1} can be viewed as a statement about patterns in permutations,
	a topic with a long history, and the subject of much current research.  
	For example, the Eulerian polynomials, introduced by Euler in 1749,
	are the generating function for the number of descents in permutations.
	For surveys of recent work, see, for example,
	Linton et al.~\cite{Linton:2010} and Kitaev \cite{Kitaev:2011}.  
	Specifically, Conjecture~\ref{conj1} is related to the following seminal result of~Erd\H{o}s and~Szekeres~\cite{Erdos1935} in the sense that they prove the existence of an increasing or a decreasing subsequence of length at least $\sqrt{n}+1$ for $n=ab+1$, which is essentially a rollercoaster with one run.
	\begin{theorem}[Erd\H{o}s and~Szekeres, 1935]
		\label{thr-ES}
		Every sequence of $ab+1$ distinct real numbers contains an increasing subsequence of length at least $a+1$ or a decreasing subsequence of length at least $b+1$.
	\end{theorem}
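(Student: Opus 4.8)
The plan is to prove this by a pigeonhole argument built on two natural labelings of the sequence. Write $S=(s_1,\dots,s_n)$ with $n=ab+1$. For each index $i$, let $\inc(i)$ denote the length of the longest increasing subsequence of $S$ that ends at $s_i$, and let $\dec(i)$ denote the length of the longest decreasing subsequence that ends at $s_i$. Each of these quantities is at least $1$, since the single term $s_i$ is itself both an increasing and a decreasing subsequence. The goal is to show that if neither target length is achieved, the number of available label pairs is too small to accommodate all $n$ indices.

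The key step is to establish that the labeling $i\mapsto(\inc(i),\dec(i))$ is injective. Take any two indices $i<j$; since the numbers are distinct, either $s_i<s_j$ or $s_i>s_j$. In the first case, any increasing subsequence ending at $s_i$ can be extended by appending $s_j$, so $\inc(j)\ge\inc(i)+1>\inc(i)$. In the second case, the symmetric argument yields $\dec(j)>\dec(i)$. In either case the pairs $(\inc(i),\dec(i))$ and $(\inc(j),\dec(j))$ differ in at least one coordinate, so the map is injective. This monotonicity-under-extension observation is the heart of the proof, and distinctness of the $s_i$ is exactly what guarantees the clean dichotomy with no ties to handle.

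Having injectivity in hand, I would finish by counting. Suppose, for contradiction, that $S$ contains no increasing subsequence of length $a+1$ and no decreasing subsequence of length $b+1$. Then $\inc(i)\in\{1,\dots,a\}$ and $\dec(i)\in\{1,\dots,b\}$ for every $i$, so every label pair lies in a set of at most $ab$ values. But injectivity forces the $n=ab+1$ indices to produce $ab+1$ distinct pairs, which cannot fit into $ab$ possible values. This contradiction shows that $S$ must contain an increasing subsequence of length at least $a+1$ or a decreasing subsequence of length at least $b+1$.

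I expect the only genuinely substantive point to be the injectivity argument; once the two labels are seen to be strictly monotone under the respective orderings, the remaining pigeonhole count is routine. No auxiliary lemmas are needed beyond elementary properties of subsequences, so I anticipate no real obstacle beyond stating the extension argument carefully.
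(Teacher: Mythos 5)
Your proof is correct. The paper does not prove this statement itself --- it cites it as the classical Erd\H{o}s--Szekeres theorem and explicitly points to Hammersley's short pigeonhole proof, which is exactly the argument you give: the injectivity of the labeling $i\mapsto(\mathrm{inc}(i),\mathrm{dec}(i))$ followed by the count of at most $ab$ label pairs. Nothing further is needed.
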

	
	Hammersley~\cite{Hammersley1972} gave an elegant proof of the Erd\H{o}s-Szekeres theorem that is short, simple, and based on the pigeonhole principle. The Erd\H{o}s-Szekeres theorem also follows from the well-known decomposition of Dilworth (see~\cite{Steele1995}). The following is a restatement of Dilworth's decomposition for sequences of numbers.
	
	\begin{theorem}[Dilworth, 1950]
		\label{thr-Dilworth}
		Any finite sequence $S$ of distinct real numbers can be partitioned into $k$ ascending sequences where $k$ is the maximum length of a descending sequence in $S$.
	\end{theorem}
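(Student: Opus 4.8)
The plan is to give a direct, constructive proof by labeling each element of $S$ with the length of the longest descending subsequence ending there, and then grouping elements by this label. Concretely, write $S=(s_1,\dots,s_n)$ and for each position $i$ let $d(i)$ be the length of a longest strictly descending subsequence of $S$ that ends at $s_i$. Each $d(i)$ is well-defined with $1\le d(i)\le k$, and in fact $k=\max_i d(i)$ is exactly the length of a longest descending subsequence in $S$, since any such subsequence ends at some element. I would then partition the elements into classes $C_1,\dots,C_k$ by setting $C_\ell=\{\,s_i : d(i)=\ell\,\}$, each class retaining the index order inherited from $S$.

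The heart of the argument is the claim that every class $C_\ell$, read in order of increasing index, is an ascending subsequence. To prove it, suppose $i<j$ with $d(i)=d(j)=\ell$ but $s_i>s_j$. Take a longest descending subsequence ending at $s_i$; it has length $\ell$, and since $i<j$ and $s_i>s_j$ we may append $s_j$ to obtain a descending subsequence of length $\ell+1$ ending at $s_j$. This contradicts $d(j)=\ell$, so in fact $s_i<s_j$, and hence $C_\ell$ is ascending. This single appending observation is the only real content of the proof.

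It remains to count the classes. The classes clearly partition $S$ into at most $k$ ascending subsequences. To see that all $k$ classes are nonempty, I would fix a descending subsequence $s_{j_1}>s_{j_2}>\dots>s_{j_k}$ of maximum length, with $j_1<\dots<j_k$. By the same appending argument the labels strictly increase along it, so $d(j_1),\dots,d(j_k)$ are $k$ distinct values in $\{1,\dots,k\}$ and therefore equal $1,2,\dots,k$ in some order. Thus each label is attained, every $C_\ell$ is nonempty, and $S$ is partitioned into exactly $k$ ascending subsequences, as claimed.

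I do not expect a genuine obstacle here: the construction is precisely the dual form of Dilworth's theorem (equivalently, Mirsky's theorem) specialized to the poset on the elements of $S$ in which $s_i$ precedes $s_j$ exactly when $i\le j$ and $s_i\le s_j$, so that chains are ascending subsequences and antichains are descending subsequences. The only point demanding care is the correct bookkeeping of the two order relations (indices and values) in the appending step; once that is set up, both the covering bound and the matching lower bound follow immediately.
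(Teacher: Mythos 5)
Your proof is correct: the labeling by $d(i)$, the appending argument showing each label class is ascending, and the observation that the labels strictly increase along a longest descending subsequence (so all $k$ classes are nonempty) together give a complete and standard proof of this Mirsky-style dual of Dilworth's theorem. The paper itself offers no proof of this statement --- it is quoted as a classical result with a citation --- so there is nothing to compare against; your argument is the usual one found in the cited sources and is sound as written.
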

	
	Besides its inherent interest, the study of rollercoasters is motivated by point-set embedding of caterpillars~\cite{Biedl2017}. A {\em caterpillar} is a tree such that deleting the leaves gives a path, called the {\em spine}. An {\em ordered caterpillar} is a caterpillar in which the cyclic order of edges incident to each vertex is specified. A {\em top-view caterpillar} is an ordered caterpillar where all vertices have degree 4 or 1 such that the two leaves adjacent to each vertex lie on opposite sides of the spine. Planar orthogonal drawings of trees on a fixed set of points in the plane have been explored recently, see e.g., \cite{Biedl2017, Giacomo2013, Scheucher2015}; in these drawings every edge is drawn as an orthogonal path between two points, and the edges are non-intersecting. 
	A {\em planar L-shaped drawing} is a simple type of planar orthogonal drawing in which every edge is an orthogonal path of exactly two segments. Such a path is called an {\em L-shaped edge}. For example see the top-view caterpillar in Figure~\ref{map-fig}(b) together with a planar L-shaped drawing on a given point set.
	Biedl~\etal~\cite{Biedl2017} proved that every top-view caterpillar on $n$ vertices has a planar L-shaped drawing on every set of $O(n \log n)$ points in the plane that is in {\em general orthogonal position}, meaning that no two points have the same $x$- or $y$-coordinate.
	
	\subsection{Our Contributions}
	In Section~\ref{rollercoaster-section} we study rollercoasters and prove Conjecture~\ref{conj1}. In fact we prove something stronger: every sequence of $n$ distinct numbers contains two rollercoasters of total length $n$. Our proof is constructive and yields a linear-time algorithm for computing such rollercoasters. We also extend our result to rollercoasters whose runs are of length at least $k$, for $k>3$. Then we present an $O(n \log n)$-time algorithm for computing a longest rollercoaster, extending the classical algorithm for computing a longest increasing subsequence. This algorithm can be implemented in $O(n\log \log n)$ time if each number in the input sequence is an integer that fits in a constant number of memory words. Then we give an estimate on the number of permutations of $\{1,\dots,n\}$ that are rollercoasters. In Section~\ref{caterpillar-section} we prove, by using Conjecture~\ref{conj1}, that every $n$-node top-view caterpillar has a planar L-shaped drawing on every set of $\frac{25}{3}n$ points in the plane in general orthogonal position.
	
	\section{Rollercoasters}
	\label{rollercoaster-section}
	
	In this section we investigate lower bounds for the length of a longest rollercoaster in a sequence of numbers. We also study algorithmic aspects of computing such rollercoasters. First we prove Conjecture~\ref{conj1}: any sequence of $n$ distinct real numbers contains a rollercoaster of length at least $\lceil n/2\rceil$. Observe that the length 4 sequence $(3,4,1,2)$ has no rollercoaster, so we will restrict to $n\geqslant 5$ in the remainder of this section. Also, due to the following proposition we assume that $n\geqslant 8$. 
	
	\begin{proposition}
		Every sequence of $n\in\{5,6,7\}$ distinct real numbers contains a rollercoaster of length at least $3$. This bound is tight in the worst case.
	\end{proposition}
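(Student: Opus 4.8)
The plan is to prove the lower bound from the Erdős--Szekeres theorem and to establish tightness by exhibiting explicit extremal sequences, after first pinning down the possible shapes of short rollercoasters.

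For the lower bound I would apply Theorem~\ref{thr-ES} with $a=b=2$. Since $n\geqslant 5 = 2\cdot 2+1$, any five of the terms already contain an increasing subsequence of length $3$ or a decreasing subsequence of length $3$. Three strictly monotone numbers form a single run of length $3$, hence a rollercoaster of length $3$, which gives the claimed bound for all $n\in\{5,6,7\}$.

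For tightness the key preliminary step is a shape characterization obtained by a simple counting of runs. A rollercoaster of length $4$ must be strictly monotone: two runs would share a turning point and their lengths would sum to $5$, impossible when each run has length at least $3$. The same count shows that a length-$5$ rollercoaster is either strictly monotone, a \emph{peak} $a<b<c>d>e$, or a \emph{valley} $a>b>c<d<e$. Longer rollercoasters always contain a monotone run of length $\geqslant 4$ or, in the all-triple case of length $7$, a peak or valley among their first five entries. Consequently a sequence has longest rollercoaster of length $3$ exactly when it contains (i) no increasing and no decreasing subsequence of length $4$, and (ii) no peak and no valley subsequence; I would record this equivalence explicitly, as it turns the tightness claim into a finite pattern check.

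It then remains to exhibit one extremal sequence for each $n$. For $n\in\{5,6\}$ the clean route is to take a sequence whose longest decreasing subsequence has length $2$, such as $(4,5,6,1,2)$ and $(4,5,6,1,2,3)$: the absence of any decreasing triple rules out a monotone $4$, a peak, and a valley simultaneously, so (i) and (ii) hold at once. The hard part is $n=7$. Here a pigeonhole argument via Dilworth's decomposition (Theorem~\ref{thr-Dilworth}) forces both the longest increasing and the longest decreasing subsequence to equal $3$, so decreasing triples are now unavoidable and must be placed so that no such triple can serve as the descending arm of a peak or the ascending arm of a valley. I would verify that $(6,7,2,4,5,1,3)$ works by checking (i) and (ii) directly: the extremal values $7$ and $1$ are positioned so that $7$ has only one entry before it and $1$ only one entry after it, which prevents either from being the apex of a peak or the nadir of a valley, and a short case analysis over the remaining candidate apices and nadirs (the values $5$ and $2$) completes the verification. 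Its length-$5$ and length-$6$ prefixes can be used to re-derive the earlier cases if a single uniform example is preferred.
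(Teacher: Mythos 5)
Your proof is correct and follows essentially the same route as the paper: Erd\H{o}s--Szekeres with $a=b=2$ for the lower bound, and an explicit length-$7$ extremal sequence for tightness (the paper uses $(5,2,6,3,7,1,4)$ and simply asserts, via a figure, that its longest rollercoaster has length $3$, with the $n\in\{5,6\}$ cases covered implicitly by taking subsequences). Your monotone-$4$/peak/valley characterization of rollercoasters of length at least $4$ makes the tightness check systematic rather than by inspection, and your example $(6,7,2,4,5,1,3)$ does verify, so this is a valid and slightly more rigorous rendering of the same argument.
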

	
	\begin{wrapfigure}{r}{1.5in} 
		\vspace{-18pt} 
		\centering
		\includegraphics[width=1.2in]{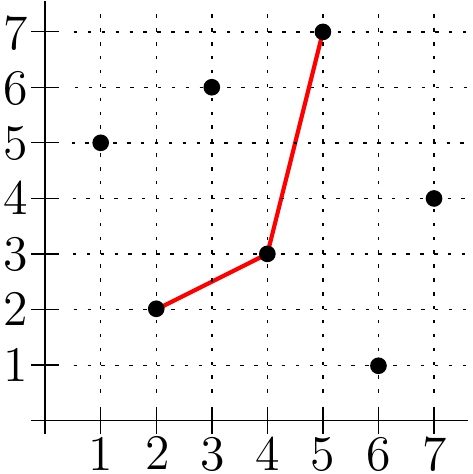} 
		\vspace{-15pt} 
	\end{wrapfigure}
	\noindent{\em{\color{black} Proof.}} By applying Theorem~\ref{thr-ES} with $a=b=2$ we get that every sequence of at least $ab+1=5$ distinct numbers contains an increasing or a decreasing subsequence of length at least $a+1=3$. This \changed{sub}sequence is a rollercoaster of length at least 3.
	For the tightness of this bound, consider the sequence $(5, 2, 6, 3, 7, 1, 4)$, depicted in the adjacent figure. It has length 7 and its longest rollercoaster has length 3.
	\qed \vspace{8pt}

	We refer to a polygonal path as a {\em chain}. We define an {\em ascent} (resp.,~a {\em descent}) as an increasing (resp.,~a decreasing) sequence. We define a {\em k-ascent} (resp.,~a {\em k-descent}) as an ascent (resp.,~a descent) with at least $k$ elements. We also use $k$-ascent and $k$-descent to refer to increasing and decreasing chains with at least $k$ points, respectively. With this definition, a rollercoaster is a sequence in which every run is either a 3-ascent or a 3-descent. We refer to the rightmost run of a rollercoaster as its {\em last run}.

	\subsection{A Proof of Conjecture~\ref{conj1}}
	\label{general-case-section}
	
	In this section we prove the following theorem, which is a restatement of Conjecture~\ref{conj1}. Our proof is constructive, and yields a linear-time algorithm for finding such a rollercoaster.
	
	\begin{theorem}
		\label{rollercoaster3-thr}
		Every sequence of $n\geqslant 8$ distinct real numbers contains a rollercoaster of length at least $\left\lceil n/2\right\rceil$; such a rollercoaster can be computed in linear time. The lower bound of $\left\lceil n/2\right\rceil$ is tight in the worst case.
	\end{theorem}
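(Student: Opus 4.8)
The plan is to prove a stronger statement: every sequence $S$ of $n\geqslant 8$ distinct reals can be split into two subsequences, each of which is a rollercoaster. Since the two subsequences partition $S$, their lengths sum to exactly $n$, and so by the pigeonhole principle at least one of them has length at least $\lceil n/2\rceil$ (if both had length at most $\lceil n/2\rceil-1$, their total would be at most $n-1$). It therefore suffices to exhibit such a partition, and to produce it constructively in one pass so that the linear-time bound follows for free.

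To build the two rollercoasters I would first decompose $S$ into its maximal monotone runs $r_1,\dots,r_m$, whose directions alternate between ascending and descending and which overlap only at the turning points (the local maxima and minima). A run of length at least $3$ is already a legal piece of a rollercoaster; the difficulty is caused entirely by \emph{short} runs of length $2$, i.e.\ a single up- or down-step around a turning point, which cannot stand on its own. The idea is to process the points left to right, maintaining two rollercoasters $A$ and $B$ under construction and appending each new point to one of them, under the rule: never create a turn in $A$ (resp.\ in $B$) until its current last run has reached length $3$, and when the next point of $S$ would force such a premature turn, route it to the other rollercoaster instead. On a long alternating stretch of short runs (a sawtooth such as $2,1,4,3,6,5,\dots$) this rule automatically sends the ``upper'' points to one rollercoaster and the ``lower'' points to the other, smoothing each into a long monotone run --- exactly the split that makes the sawtooth example work.

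I would carry this out while maintaining the invariant that each of $A$ and $B$ is a rollercoaster except possibly for its last run, which may still be too short, together with a bound on how much ``unfinished'' material is pending. The main obstacle is to prove that this greedy never gets stuck --- that for every local configuration of run lengths there is always a legal target for the next point --- and, crucially, to finish cleanly: at the very end each of $A,B$ may carry one deficient last run, and these must be repaired by transferring or reassigning only a bounded number of points between $A$ and $B$, without discarding any point, so that the partition is honest and the total remains $n$. This end-game, and the accompanying case analysis on consecutive short runs, is precisely where the hypothesis $n\geqslant 8$ enters; the cases $n\in\{5,6,7\}$ already handled by the preceding proposition show that no such clean finish is available for small $n$, and the base cases down to $n=8$ I would dispatch directly.

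For tightness, consider the sawtooth permutation $2,1,4,3,\dots,2k,2k-1$ on $n=2k$ elements. Its longest descending subsequence has length $2$, so no descending run of length $3$ exists; since any rollercoaster with more than one run must (its runs alternating in direction) contain a descending run, every rollercoaster here consists of a single ascending run and hence has length at most the longest increasing subsequence, which is $k=n/2$. Thus the longest rollercoaster has length exactly $\lceil n/2\rceil$, matching the lower bound; appending one large final element handles odd $n$. Finally, the linear-time claim follows because the run decomposition is a single scan and the greedy, together with the bounded end-game repair, touches each point a constant number of times.
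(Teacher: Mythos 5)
Your central claim---that every sequence of $n\geqslant 8$ distinct reals can be \emph{partitioned} into two rollercoasters---is false, and the paper's own tightness example refutes it. The paper exhibits point sets (Figure~\ref{first-and-last-fig}(b)) in which every rollercoaster containing $p_1$ has length at most $n/4+2$, while the longest rollercoaster overall has length $\left\lceil n/2\right\rceil$. In any partition into two rollercoasters, the part containing $p_1$ has at most $n/4+2$ points and the other at most $\left\lceil n/2\right\rceil$, for a total of at most $3n/4+3<n$ once $n$ is moderately large. So no such partition exists, and your greedy \emph{must} get stuck at the step you flag as ``the main obstacle''---there is no end-game repair that rescues it. The paper sidesteps this in two ways that your proposal forbids: the two chains it builds are allowed to \emph{share} points (its sweep invariant forces the last ascending run of one and the last descending run of the other to have a common point after every iteration, so shared points are counted twice in $|R_1|+|R_2|\geqslant n$), and the points $p_1$ and $p_n$ may be discarded entirely. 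The overlap is what pays for the discarded points; insisting on a genuine partition with no discards removes exactly the slack that makes the count work. The mechanism producing the overlap is also the idea your sketch is missing: when neither chain can be extended, the paper sweeps forward to the first moment a 3-descent appears, applies Dilworth's theorem to split the intervening points into two ascending chains, and attaches one (plus the new 3-descent) to each pseudo-rollercoaster so that the two new last runs share a vertex.

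Your tightness argument, by contrast, is correct and somewhat cleaner than the paper's: in the sawtooth $2,1,4,3,\dots,2k,2k-1$ the longest descending subsequence has length $2$, so every rollercoaster is a single ascending run of length at most $k=n/2$, and appending one large element handles odd $n$. Keep that part; the lower-bound half needs to be rebuilt around overlapping chains (or some other device) rather than a partition.
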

	
	Consider a sequence with $n \geqslant 8$ distinct real numbers, and let $P$ be its point-set translation with points $p_1,\dots,p_n$ that are ordered from left to right. We define a {\em pseudo-rollercoaster} as a sequence in which every run is a 3-ascent or a 3-descent, except possibly the first run. We present an algorithm that computes two pseudo-rollercoasters $R_1$ and $R_2$ in $P$ such that $|R_1|+|R_2|\geqslant n$; the length of the longer one is at least  $\left\lceil n/2\right\rceil$. Then with a more involved proof we show how to extend this longer pseudo-rollercoaster to obtain a rollercoaster of length at least $\left\lceil n/2\right\rceil$; this will prove the lower bound.
	\subsubsection{An Algorithm}
	\label{algorithm-section}
	First we provide a high-level description of our algorithm as depicted in Figure~\ref{algorithm-fig}. Our algorithm is iterative, and proceeds by sweeping the plane by a vertical line $\ell$ from left to right. We maintain the following invariant:

	\begin{invariant*}
		At the beginning of every iteration we have two pseudo-rollercoasters whose union is the set of all points to the left of $\ell$ and such that the last run of one of them is an ascent and the last run of the other one is a descent.
		\changed{Furthermore, these two last runs have a point in common.}
	\end{invariant*}
	
	\begin{figure}[htb]
		\centering
		\includegraphics[width=.56\columnwidth]{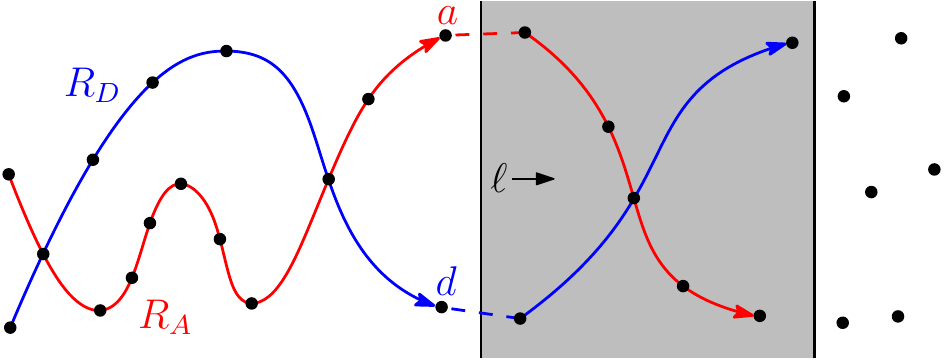}
		\caption{One iteration of algorithm: Constructing two pseudo-rollercoasters.}
		\label{algorithm-fig}
	\end{figure}
	
	During every iteration we move $\ell$ forward and try to extend the current pseudo-rollercoas\-ters. If this is not immediately possible with the next point, then we move $\ell$ farther and stop as soon as we are able to split all the new points into two chains that can be appended to the current pseudo-rollercoasters to obtain two new pseudo-rollercoasters that satisfy the invariant. See Figure~\ref{algorithm-fig}. 
	
	Now we present our iterative algorithm in detail. 
	
	\vspace{5pt}
	{\noindent\bf The First Iteration:} 
	We take the leftmost point $p_1$, and initialize each of the two pseudo-rollercoasters by $p_1$ alone. We may consider one of the pseudo-rollercoasters to end in an ascent and the other pseudo-rollercoaster to end in a descent. 
	\changed{The two runs have a point in common.}
	
	\vspace{5pt}
	{\noindent\bf An Intermediate Iteration:} 
	By the above invariant we have two pseudo-rollercoasters $R_A$ and $R_D$ whose union is the set of all points to the left of $\ell$ and such that the last run of one of them, say $R_A$, is an ascent and the last run of $R_D$ is a descent. 
	\changed{Furthermore, the last run of $R_A$ and the last run of $R_D$ have a point in common.}
	During the current iteration we make sure that every swept point will be added to $R_A$ or $R_D$ or both. We also make sure that at the end of this iteration the invariant will hold for the next iteration. Let $a$ and $d$ denote the rightmost points of $R_A$ and $R_D$, respectively; see Figure~\ref{algorithm-fig}. 
	Let $p_i$ be the first point to the right of $\ell$. If $p_i$ is above $a$, we add $p_i$ to $R_A$ to complete this iteration. Similarly, if $p_i$ is below $d$, we add $p_i$ to $R_D$ to complete this iteration. In either case we get two pseudo-rollercoasters that satisfy the invariant for the next iteration. Thus we may assume that $p_i$ lies below $a$ and above $d$. In particular, this means that $a$ lies above $d$.  
	
	Consider the next point $p_{i+1}$. 
	\changed{(If there is no such point, go to the last iteration.)}
	Suppose without loss of generality that $p_{i+1}$ lies above $p_i$ as depicted in Figure~\ref{two-chain-fig}. Then $d, p_i, p_{i+1}$ forms a 3-ascent. \changed{Continue considering} points $p_{i+2}, \ldots, p_k$ until for the first time, there is a 3-descent in $a, p_i, \ldots, p_k$. 
	In other words, $k$ is the smallest index for which $a, p_i, \ldots, p_k$ contains a descending chain of length 3.
	\changed{(If we run out of points before finding a 3-descent, then go to the last iteration.)} 
	
	\remove{
		Let this descending chain be $p_{k''}, p_{k'}, p_k$ with $p_{k'}$ as far to the right as possible, i.e.,~$k'$ as large as possible, and subject to that, with $k''$ as large as possible. 
		Observe that $k'' \ne i$ since ${i+1}$ would be a better choice. It may happen that $a = p_{k''}$ and $i = {k'}$.
	}
	
	\begin{figure}[htb]
		\centering
		\includegraphics[width=.55\columnwidth]{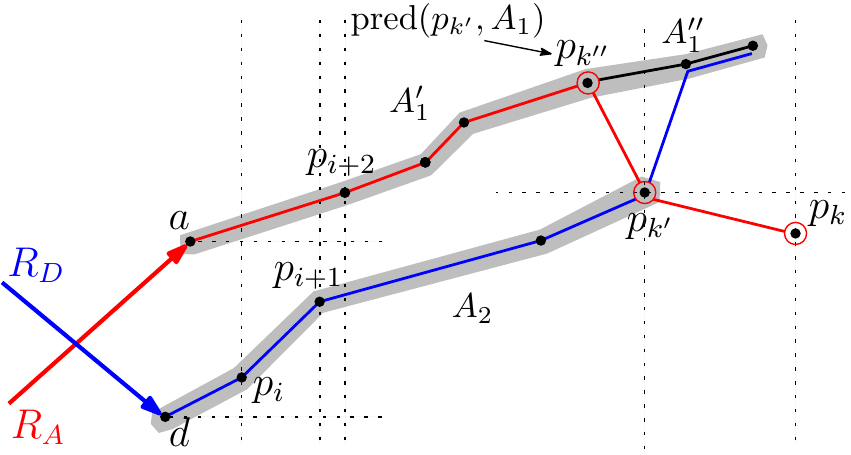}
		\caption{Illustration of an intermediate iteration of the algorithm.}
		\label{two-chain-fig}
	\end{figure}
	
	Without $p_k$ there is no descending chain of length 3. Thus the longest descending chain has two points, and by Theorem~\ref{thr-Dilworth}, the sequence $P'=a, p_i, p_{i+1}, \ldots, p_{k-1}$ is the union of two ascending chains. 
	\changed{We give an algorithm to find two such chains $A_1$ and $A_2$ with $A_1$ starting at $a$ and $A_2$ starting at $p_i$.   
		The algorithm also finds the 3-descent ending with $p_k$.  For every point $q\in A_2$ we define its $A_1$-{\em predecessor} to be the rightmost point of $A_1$ that is to the left of $q$. We denote the $A_1$-predecessor of $q$ by $\pred{q}{A_1}$.
		
		The algorithm is as follows: While moving $\ell$ forward, we denote by $r_1$ and $r_2$ the rightmost points of $A_1$ and $A_2$, respectively; at the beginning $r_1=a$, $r_2=p_i$, and $\pred{p_i}{A_1}=a$. Let $p$ be the next point to be considered. If $p$ is above $r_1$ then we add $p$ to $A_1$. If $p$ is below $r_1$ and above $r_2$, then we add $p$ to $A_2$ and set $\pred{p}{A_1}=r_1$. If $p$ is below $r_2$, then we find our desired first 3-descent formed by (in backwards order) $p_k=p$, $p_{k'}=r_2$, and $p_{k''}=\pred{r_2}{A_1}$.  See Figure~\ref{two-chain-fig}.
		This algorithm runs in time $O(k-i)$, which is proportional to the number of swept points.
		
		We add point $d$ to the start of chain $A_2$.  
		The resulting chains $A_1$ and $A_2$ are shaded in Figure~\ref{two-chain-fig}.
		Observe that $A_2$ ends at $p_{k'}$. 
		Also, all points of $P'$ that are to the right of $p_{k'}$ (if there are any) belong to $A_1$, and  lie to the right of $p_{k''}$, and form an ascending chain. Let $A''_1$ be this ascending chain. Let $A'_1$ be the sub-chain of $A_1$ up to $p_{k''}$; see Figure~\ref{two-chain-fig}. 
		Now we form one pseudo-rollercoaster (shown in red) consisting of  $R_A$ followed by $A'_1$ and then by the descending chain $p_{k''},p_{k'},p_k$.  
		We form another pseudo-rollercoaster (shown in blue) consisting of $R_D$ followed by $A_2$ and then by $A''_1$. 
		We need to verify that the ascending chain added after $d$ has length at least 3.  
		This chain contains $d, p_{i}$ and $p_{k'}$. This gives a chain of length at least 3 unless $k' = i$, but in this case $p_{k''} = a$, so $p_{i+1}$ is part of $A''_1$ and consequently part of this ascending chain.
		Thus we have constructed two longer pseudo-rollercoasters whose union is the set of all points up to point $p_k$, one ending with a 3-ascent and one with a 3-descent and such that the last two runs share the point $p_{k'}$.
		Figure~\ref{first-and-last-fig}(a) shows an intermediate iteration.
	}
	
	\remove{
		Of necessity, one of these starts with $a$ and the other with $p_i$. Let $A_1$ denote the chain that contains $a$ and $A_2$ denote the other one. We can add $d$ to the beginning of $A_2$. These chains are shaded in Figure~\ref{two-chain-fig}. For every point $q\in A_2$ we define its $A_1$-{\em predecessor} to be the rightmost point of $A_1$ that is to the left of $q$. We denote the $A_1$-predecessor of $q$ by $\pred{q}{A_1}$.

		First we show how to compute $A_1$ and $A_2$ in $O(k-i)$ time, which is proportional to the number of swept points. 
		Within the same time we compute the $A_1$-predecessor of every point $q\in A_2$. While moving $\ell$ forward, we denote by $r_1$ and $r_2$ the rightmost points of $A_1$ and $A_2$, respectively; at the beginning $r_1=a$, $r_2=p_i$, and $\pred{p_i}{A_1}=a$. Let $p$ be the next point to be considered. If $p$ is above $r_1$ then we add $p$ to $A_1$. If $p$ is below $r_1$ and above $r_2$, then we add $p$ to $A_2$ and set $\pred{p}{A_1}=r_1$. If $p$ is below $r_2$, then we find our desired first 3-descent with $p_k=p$, $p_{k'}=r_2$, and $p_{k''}=\pred{r_2}{A_1}$. 
		
		In the above construction, $A_2$ ends at $p_{k'}$. Also, all points of $P'$ that are to the right of $p_{k'}$ (if there are any) belong to $A_1$, also lie to the right of $p_{k''}$, and form an ascending chain. Let $A''_1$ be this ascending chain. Let $A'_1$ be the sub-chain of $A_1$ up to $p_{k''}$; see Figure~\ref{two-chain-fig}. Now we form one pseudo-rollercoaster by $R_A$ followed by $A'_1$ and then by the descending chain $p_{k''},p_{k'},p_k$. We obtain another pseudo-rollercoaster by $R_D$ followed by $A_2$ and then by $A''_1$. We need to verify that the ascending chain added after $d$ has length at least 3.  
		This chain contains $d, p_{i}$ and $p_{k''}$ or $p_{k'}$. Since $k'' \ne i$, this gives a chain of length at least 3 unless $k' = i$, but in this case $p_{i+1}$ is part of $A''_1$ and consequently part of this ascending chain.
		Thus we have constructed two longer pseudo-rollercoasters whose union is the set of all points up to point $p_k$, one ending with a 3-ascent and one with a 3-descent.
		Observe that these two pseudo-rollercoasters intersect in $p_{k'}$. Figure~\ref{first-and-last-fig}(a) shows an intermediate iteration.
	}

	\vspace{5pt}
	{\noindent\bf  The Last Iteration:} 
	If there are no points left, then we terminate \changed{the algorithm}. Otherwise, let $p_i$ be the first point to the right of $\ell$. Let $a$ and $d$ be the endpoints of the two pseudo-rollercoasters obtained so far, such that $a$ is the endpoint of an ascent and $d$ is the endpoint of a descent.  
	Notice that $p_i$ is below $a$ and above $d$, because otherwise this iteration would be an intermediate one. For the same reason, the remaining points $p_i,\dots, p_n$ do not contain a 3-ascent together with a 3-descent. If $p_i$ is the last point, i.e., $i=n$, then we discard this point and terminate this iteration. Assume that $i\neq n$, and suppose without loss of generality that the next point $p_{i+1}$ lies above $p_i$. In this setting, by Theorem~\ref{thr-Dilworth} and as described in an intermediate iteration, with the remaining points, we can get two ascending chains $A_1$ and $A_2$ such that $A_2$ contains at least two points. By connecting $A_1$ to $a$ and $A_2$ to $d$ we get two pseudo-rollercoasters whose union is all the points (in this iteration we do not need to maintain the invariant).  
	
	\begin{figure}[htb]
		\centering
		\setlength{\tabcolsep}{0in}
		$\begin{tabular}{cc}
		\multicolumn{1}{m{.35\columnwidth}}{\centering\includegraphics[width=.27\columnwidth]{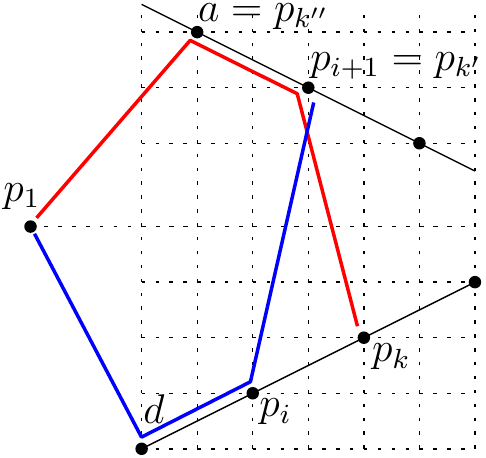}}
		&\multicolumn{1}{m{.65\columnwidth}}{\centering\includegraphics[width=.64\columnwidth]{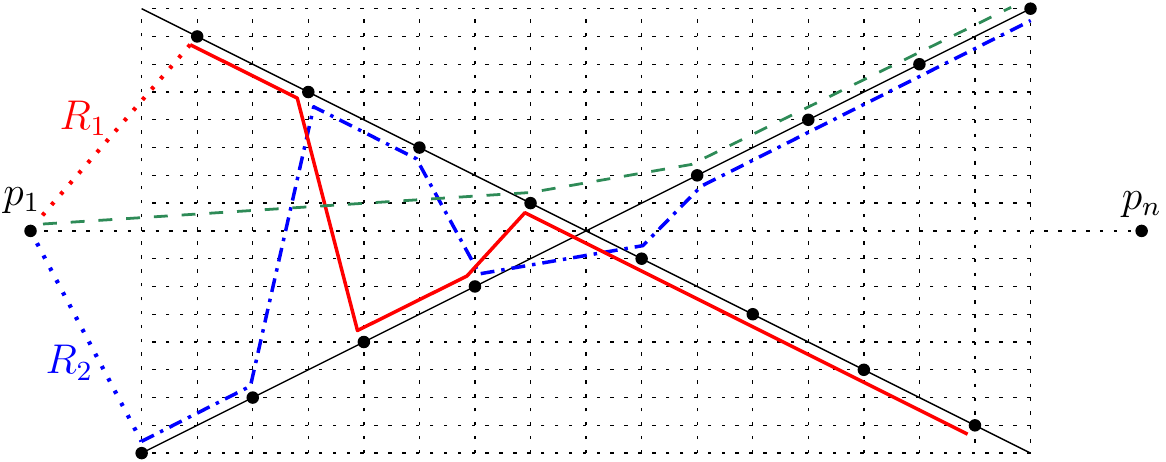}}\\
		(a) & (b)
		\end{tabular}$
		\caption{(a) An intermediate iteration. (b) A point set for which any rollercoaster of length at least $n/4+3$ does not contain $p_1$ and $p_n$. The green (dashed) rollercoaster, which contains $p_1$, has length $n/4+2$. The red (solid) and blue (dash-dotted) chains are the two rollercoasters returned by our algorithm, where blue is a longest possible one.}
		\label{first-and-last-fig}
	\end{figure}

	\vspace{5pt}
	{\noindent\bf  Final Refinement:} 
	At the end of algorithm, we obtain two pseudo-rollercoasters $R_1$ and $R_2$ that share $p_1$ and such that their union contains all points of $P$, except possibly $p_n$. Thus, $|R_1|+|R_2|\geqslant n$, and the length of the longer one is at least $\left\lceil\frac{n}{2}\right\rceil$. 
	
	Recall that every run of pseudo-rollercoasters $R_1$ and $R_2$ is a 3-ascent or a 3-descent, except possibly the first run. If the first run of $R_1$ (resp.,~$R_2$) contains only two points, then we remove $p_1$ to obtain a rollercoaster $\mathcal{R}_1$ (resp.,~$\mathcal{R}_2$). Therefore, we obtain two rollercoasters whose union contains all points, except possibly $p_1$ and $p_n$. 
	
\remove{
		Let $R_1$ be the pseudo-rollercoaster that---in the first iteration---was assumed to have an ascent ending in $p_1$, and let $R_2$ be the one that was assumed to have a descent ending in $p_1$.
		Notice that if the first run of $R_1$ is decreasing, then as described in an intermediate iteration, this run contains at least three points $p_{k''}$, $p_{k'}$, and $p_k$ (with $p_{k''}=a=p_1$). In this case we do not remove $p_1$ from $R_1$. Also if the first run of $R_2$ is increasing, then it contains at least three points $p_1=d$, $p_i$, and $p_{k'}$ or $p_{k''}$ or $p_{i+1}$. In this case we do not remove $p_1$ from $R_2$. Therefore, $p_1$ will be removed from both $R_1$ and $R_2$ if and only if the first run of $R_1$ is an increasing chain with two points and the first run of $R_2$ is a decreasing chain with two points. This configuration is depicted in Figure~\ref{first-and-last-fig}(b).
	}
	
	This is the end of our algorithm. In the next section we analyze the length of the resulting rollercoaster, the tightness of the claimed lower bound, and the running time of the algorithm. 
	
	\subsubsection{Length and Running-Time Analysis}
	
	Our algorithm computes two rollercoasters $\mathcal{R}_1$ and $\mathcal{R}_2$ consisting of all points of $P$, except possibly $p_1$ and $p_n$. Thus, the total length of these rollercoasters is at least $n-2$, and the length of the longer one is at least $\left\lceil\frac{n-2}{2}\right\rceil$. In Appendix~\ref{appA} we improve this bound to $\left\lceil\frac{n}{2}\right\rceil$ by revisiting the first and last iterations of our algorithm with some case analysis. 
	
	We note that there are point sets, with $n$ points, for which every rollercoaster of length at least $n/4+3$ does not contain any of $p_1$ and $p_n$; see e.g., the point set in Figure~\ref{first-and-last-fig}(b). This example shows also the tightness of the $\lceil n/2 \rceil$ lower bound on the length of a longest rollercoaster; the blue rollercoaster is a longest possible and contains $\left\lceil n/2\right\rceil$ points. By removing the point $p_{n-1}$ an example for odd $n$ is obtained.
	
	To verify the running time, notice that the first iteration and final refinement take constant time, and the last iteration is essentially similar to an intermediate iteration. As described in an intermediate iteration the time complexity to find a 3-ascent and a 3-descent for the first time together with the time complexity to compute chains $A'_1$, $A''_1$, and $A_2$ is $O(k-i)$, which is linear in the number of swept points $p_i,\dots,p_k$. Based on this and the fact that every point is considered only in one iteration, our algorithm runs in $O(n)$ time.
	
	\subsection{An Extension}
	\label{extension}
	In this section we extend our result to $k$-rollercoasters. A $k$-{\em rollercoaster} is a sequence of real numbers in which every run is either a $k$-ascent or a $k$-descent. 
	
	\begin{theorem}
		\label{rollercoasterk-thr}
		Let $k\geqslant 4$ be an integer. Then every sequence of $n\geqslant (k-1)^2+1$ distinct real numbers contains a $k$-rollercoaster of length at least $\frac{n}{2(k-1)}-\frac{3k}{2}$.
	\end{theorem}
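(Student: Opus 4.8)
The plan is to generalize the two-rollercoaster construction from Theorem~\ref{rollercoaster3-thr} by replacing the ``3-ascent/3-descent'' machinery with ``$k$-ascent/$k$-descent'' machinery, but the sharper way to proceed is to reduce directly to the $k=3$ case rather than re-running the entire sweep. First I would apply the Erd\H{o}s--Szekeres theorem (Theorem~\ref{thr-ES}) as a preprocessing step: partition-style reasoning lets us extract a long monotone subsequence, and more usefully, repeated application lets us thin the sequence so that each surviving monotone ``block'' has the length $k$ we need. Concretely, the idea is to take the $3$-rollercoaster guaranteed by Theorem~\ref{rollercoaster3-thr} of length at least $\lceil n/2\rceil$ and then \emph{inflate} each of its runs into a genuine $k$-run using extra points drawn from the original sequence.

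The cleaner route, which I expect to be the intended one, is to first observe that every run of the $3$-rollercoaster is a monotone chain of length at least $3$, and we want every run to have length at least $k$. So the key step is a lengthening argument: within each maximal monotone stretch of the underlying sequence we invoke Erd\H{o}s--Szekeres (or Dilworth, Theorem~\ref{thr-Dilworth}) to guarantee that a short $3$-run sitting inside the data can be extended to a $k$-run, at the cost of discarding a bounded number of points per run. Since a $3$-rollercoaster of length $m$ has at most $m/2$ runs (each run after the first shares its endpoint with the previous run, and each contributes at least two new points), the total number of runs we must lengthen is $O(n/k)$, and the total wastage is controlled. Setting $n \geqslant (k-1)^2+1$ ensures Erd\H{o}s--Szekeres with $a=b=k-1$ fires and delivers a monotone chain of length at least $k$ to seed the construction, which is exactly why that threshold appears.

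The arithmetic I would then carry out: a $k$-rollercoaster built this way has roughly $n/(2(k-1))$ runs, each of length about $k$, so before correction its length is of order $n/2$; the subtractive term $\tfrac{3k}{2}$ accounts for the boundary runs (first and last) and the handful of points lost when forcing the final run or the initial run to reach full length $k$, analogous to the final-refinement step where $p_1$ and $p_n$ were dropped in the $k=3$ proof. Thus the target bound $\frac{n}{2(k-1)}-\frac{3k}{2}$ emerges from counting $\lfloor n/(k-1)\rfloor$-many usable monotone segments, keeping half of them on the longer of the two pseudo-rollercoasters, and paying the constant-per-boundary correction of $O(k)$.

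The hard part will be making the ``inflation'' step rigorous while keeping the runs alternating correctly in direction: when we lengthen a $3$-ascent to a $k$-ascent by inserting additional points, we must ensure the inserted points do not violate monotonicity of the adjacent descending runs or destroy the shared-endpoint structure that glues consecutive runs together. I would handle this by carrying out the sweep-and-split algorithm of the intermediate iteration verbatim but with the stopping rule changed from ``first $3$-descent in $a,p_i,\dots,p_k$'' to ``first $k$-descent,'' and then re-checking the invariant that the two last runs share a point. The delicate accounting is the loss term: one must verify that at the transition between a $k$-ascent and the following $k$-descent we do not lose more than $O(k)$ points, and that summed over all $O(n/k)$ transitions this does not eat into the leading $n/(2(k-1))$ term beyond the stated additive $\tfrac{3k}{2}$. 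I would isolate that bound as the one genuinely non-routine estimate and verify it by bounding, per transition, the number of points that can be ``skipped'' before a $k$-descent is detected, which is at most $(k-1)^2$ by Erd\H{o}s--Szekeres but amortizes to $O(k)$ per kept run since those skipped points are themselves reused on the complementary pseudo-rollercoaster.
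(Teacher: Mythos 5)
There is a genuine gap, and it lies exactly where you flagged the difficulty: the ``inflation'' step. Erd\H{o}s--Szekeres guarantees that a sufficiently long sequence contains \emph{some} monotone subsequence of length $k$; it does not let you extend a \emph{specific} $3$-run of an already-computed rollercoaster to a $k$-run, and in general no such extension exists (a $3$-rollercoaster of length $\lceil n/2\rceil$ may have every run of length exactly $3$ with nothing in the ambient sequence available to lengthen any of them while preserving the shared-endpoint structure). Your accounting also contradicts the target bound: you argue that only $O(k)$ points are lost per transition and that skipped points are ``reused on the complementary pseudo-rollercoaster,'' which would yield two chains of total length close to $n$ and hence a $k$-rollercoaster of length about $n/2$. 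The theorem's bound is $\frac{n}{2(k-1)}-\frac{3k}{2}$, smaller by a factor of $k-1$, precisely because for $k\geqslant 4$ the swept points can no longer be covered by two monotone chains (Dilworth only gives that for $k=3$, where the longest descent among the swept points has length $2$). The paper accepts this loss rather than fighting it.

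The actual argument you are missing is a per-phase counting lemma. Sweep until the swept points $P'=(p_i,\dots,p_{i+m-1})$ first contain both a $k$-ascent and a $k$-descent; by minimality, $P''=(p_i,\dots,p_{i+m-2})$ lacks one of them, say a $k$-descent. Choosing $\alpha$ with $(\alpha-1)(k-1)<m-1\leqslant\alpha(k-1)$ and applying Erd\H{o}s--Szekeres to $P''$ (which has at least $(\alpha-1)(k-1)+1$ points and no descending subsequence of length $k$) yields an ascending subsequence of length at least $\alpha$. Taking $A$ and $D$ to be the longest ascent and descent in $P'$ gives $|A|+|D|\geqslant\alpha+k\geqslant\frac{m-1}{k-1}+k>\frac{m}{k-1}$; attach $D$ to the ascent-ending rollercoaster and $A$ to the descent-ending one and \emph{discard all other swept points}. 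Summing over phases, discarding up to $(k-1)^2$ leftover points in the last phase, and trimming up to $k-2$ points from each of the two first runs gives $|R_A|+|R_D|\geqslant\frac{n-(k-1)^2}{k-1}-2(k-2)$, whence the longer one meets the bound. Your proposal never establishes that the kept fraction per phase is at least $\frac{1}{k-1}$, which is the heart of the proof.
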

	\begin{proof}
		Our proof follows the same iterative approach of the proof of Theorem~\ref{rollercoaster3-thr}. Consider a sequence of $n$ distinct real numbers and its point-set translation $p_1,\dots,p_n$.
		We sweep the plane by a line $\ell$, and maintain two $k$-rollercoasters $R_A$ and $R_D$ to the left of $\ell$ such that the last run of $R_A$ is an ascent and the last run of $R_D$ is descent. In each iteration we move $\ell$ forward and stop as soon as we see a $k$-ascent $A$ and a $k$-descent $D$. Then we attach $D$ to $R_A$, and $A$ to $R_D$. To achieve the claimed lower bound, we make sure that the total length of $A$ and $D$ is at least $1/(k-1)$ times the number of swept points. 
		
		Consider an intermediate iteration. Let $m$ be the number of swept points in this iteration and let $P'=(p_i,p_{i+1}\dots,p_{i+m-2},\allowbreak p_{i+m-1})$ be the sequence of these points. Our strategy for stopping $\ell$ ensures that $P'$ contains a $k$-ascent and a $k$-descent, while $P''=(p_i,\dots,p_{i+m-2})$ may contain only one of them but not both. Without loss of generality assume that $P''$ does not contain a $k$-descent. Let $\alpha$ be the integer for which
		\begin{equation}
		\label{eq1}
		(\alpha-1)(k-1)<m-1\leqslant \alpha (k-1).
		\end{equation}
		
		The left-hand side of Inequality~\eqref{eq1} implies that $P''$ has at least $(\alpha-1)(k-1)+1$ points. Having this and our assumption that $P''$ does not contain a $k$-descent, Theorem~\ref{thr-ES} implies that $P''$ contains an increasing subsequence of length at least $\alpha$. We take the longest increasing and the longest decreasing subsequences in $P'$ as $A$ and $D$, respectively. Observe that $|A|\geqslant \max\{k,\alpha\}$ and $|D|=k$. This and the right-hand side of Inequality~\eqref{eq1} imply that 
		\[|A|+|D|\geqslant\alpha+k\geqslant \frac{m-1}{k-1}+k> \frac{m}{k-1},\]
		which means that the total length of $A$ and $D$ is at least $1/(k-1)$ times the number of swept points. 
		In the last iteration if we sweep at most $(k-1)^2$ points then we discard all of them. But if we sweep more than $(k-1)^2$ points then by an argument similar to the one above there exists an integer $\alpha$, with $\alpha\geqslant m/(k-1)$, for which we get either an $\alpha$-ascent or an $\alpha$-descent, which contains at least $1/(k-1)$ fraction of the swept points.
		
		The first iteration is similar to the one in the proof of Theorem~\ref{rollercoaster3-thr}: we assume the existence of an ascent and a descent that end at the first point. At the end of algorithm if the first run of any of $R_A$ and $R_D$ contains $k'$ points, for some $k'<k$, then by removing $k'-1$($\leqslant k-2$) points from its first run we get a valid $k$-rollercoaster. The total length of the resulting two $k$-rollercoasters is
		\[|R_A|+|R_D|\geqslant \frac{n-(k-1)^2}{k-1} - 2(k-2),\]
		where the length of the longer one is at least \[\frac{n-(k-1)^2}{2(k-1)} - (k-2)>\frac{n}{2(k-1)}-\frac{3(k-1)}{2}.\qedhere\]
	\end{proof}
	
	\subsection{Algorithms for a Longest Rollercoaster}
	\label{algorithms}
	
	\begin{wrapfigure}{r}{2in} 
		\vspace{-4pt} 
		\centering
		\includegraphics[width=1.8in]{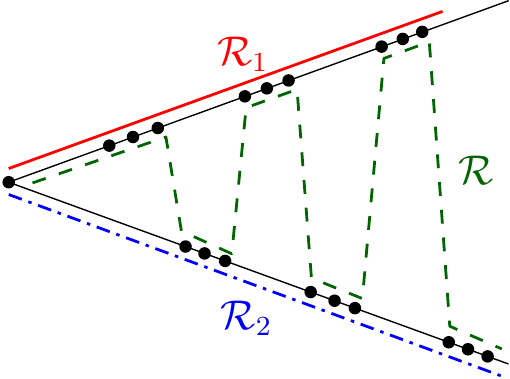} 
		\vspace{-5pt} 
	\end{wrapfigure}
	In this section we study algorithmic aspects of computing a longest rollercoaster in a given sequence $S$ of $n$ distinct real numbers. By Theorem~\ref{rollercoaster3-thr} we can compute a rollercoaster of length at least $\lceil n/2 \rceil$ in $O(n)$ time. However this rollercoaster may not necessarily be a longest one. If we run our algorithm of Section~\ref{algorithm-section} on the point set in the figure to the right, then it returns two rollercoasters $\mathcal{R}_1$ and $\mathcal{R}_2$ each of length at most $\lceil\frac{n}{2}\rceil$ while the longest rollercoaster $\mathcal{R}$ has length $n$. In this section, first we adapt the existing $O(n\log n)$-time algorithm for computing a longest increasing subsequence in $S$ to compute a longest rollercoaster in $S$ within the same time bound. Then we show that if $S$ is a permutation of $\{1,\dots,n\}$, then we can compute a longest rollercoaster in $O(n\log\log n)$ time.  
	
	First we recall Fredman's version of the $O(n\log n)$-time algorithm for computing a longest increasing subsequence~\cite{Fredman75}; for more information about longest increasing subsequence, see Romik \cite{Romik:2015}. We maintain an array $R[i]$, which initially has $R[1]=S[1]$ and is empty otherwise. Then as $i$ proceeds from $2$ to $n$, we find the largest $l$ for which $R[l]<S[i]$, and set $R[l+1]=S[i]$. This insertion ensures that every element $R[l]$ stores the smallest element of $S[1..i]$ in which an increasing subsequence of length $l$ ends. After all elements of $S$ have been processed, the index of the last non-empty element of $R$ is the largest length of an increasing sequence; the corresponding sequence can also be retrieved from $R$. Notice that $R$ is always sorted during the above process. So, the proper location of $S[i]$ in $R$ can be computed in $O(\log n)$ time by a predecessor search, which can be implemented as a binary search. Therefore, this algorithm runs in $O(n\log n)$ time.  
	
	To compute a longest rollercoaster we need to extend this approach. We maintain six arrays $R(w,h)$ with $w\in \{\inc,\dec\}$ and $h\in\{2,3_+, 3_{+}^{\prime}\}$ where $\inc$ stands for ``increasing'', $\dec$ stands for ``decreasing'', and $3_+$ stands for any integer that is at least 3. We define a {\em $w$-$h$-rollercoaster} to be a rollercoaster whose last run has $h$ points and is increasing if $w=\inc$ and decreasing if $w=\dec$. We insert $S[i]$ into arrays $R(\inc,h)$ such that after this insertion the following hold:
	
	\begin{itemize}
		\item $R(\inc,2)[l]$ stores the smallest element of $S[1..i]$ in which an $\inc$-$2$-rollercoaster of length $l$ ends. $R(\dec,2)[l]$ stores the largest element of $S[1..i]$ in which a $\dec$-$2$-rollercoaster of length~$l$ ends.
		\item $R(\inc,3_+)[l]$ stores the smallest element of $S[1..i]$ in which an $\inc$-$3_+$-rollercoaster of length $l$ ends. $R(\dec,3_+)[l]$ stores the largest element of $S[1..i]$ in which an $\dec$-$3_+$-rollercoaster of length $l$ ends.
		\item $R(\inc,3_{+}^\prime)[l]$ stores the largest element of $S[1..i]$ in which an $\inc$-$3_{+}$-rollercoaster of length $l$ ends. $R(\dec,3_{+}^\prime)[l]$ stores the smallest element of $S[1..i]$ in which a $\dec$-$3_{+}$-rollercoaster of length $l$ ends. These arrays will be used when the last run of the current rollercoaster changes from an ascent to a descent, and vice versa. 
	\end{itemize}
	
	We insert $S[i]$ into arrays $R(\dec,h)$ so that analogous constraints hold. To achieve these constraints we insert $S[i]$ as follows:
	
	\begin{itemize}
		\item $R(\inc,2)$: Find the largest index $l$ such that $R(\dec,3_{+}^\prime)[l]<S[i]$. If $S[i]<R(\inc,2)[l+1]$ then update $R(\inc,2)[l+1]=S[i]$.
		
		\item $R(\inc,3_+)$: Find the largest indices $l_1$ and $l_2$ such that $R(\inc,2)[l_1]<S[i]$ and $R(\inc,3_+)[l_2]\allowbreak <S[i]$. Let $l=\max\{l_1,l_2\}$. If $S[i]<R(\inc,3_+)[l+1]$ then update $R(\inc,3_+)[l+1]=S[i]$.
		
		\item $R(\inc,3_{+}^\prime)$: Find the largest index $l_1$ and $l_2$ such that $R(\inc,2)[l_1]\allowbreak <S[i]$ and $R(\inc,3_{+}^\prime)[l_2]\allowbreak <S[i]$. Let $l=\max\{l_1,l_2\}$. If $S[i]>R(\inc,3_{+}^\prime)[l+1]$ then update $R(\inc,3_{+}^\prime)[l+1]=S[i]$.
		
		\item The arrays $R(\dec,h)$ are updated in a similar fashion.
	\end{itemize}
	
	Since our arrays $R(w,h)$ are not necessarily sorted, we cannot perform a predecessor search to find proper locations of $S[i]$. To insert $S[i]$ we need to {\em find} the largest index $l$ such that $R(w,h)[l]$ is smaller (or, alternatively, larger) than $S[i]$ for some $w$ and $h$, and also need to {\em update} contents of these arrays. Thereby, if $A$ is an $R(w,h)$ array, we need to perform the following two operations on $A$:
	
	\begin{itemize}
		\item $\FindMax(A,S[i])$: Find the largest index $l$ such that $A[l]>S[i]$ (or $A[l]<S[i]$). 
		\item $\Update(A,l,S[i])$: Set $A[l]=S[i]$.
	\end{itemize}
	
	We implement each $R(w,h)$ as a Fenwick tree \cite{Fenwick94}, which supports $\FindMax$ and $\Update$ in $O(\log n)$ time. Thus, the total running time of our algorithm is $O(n \log n)$. After all elements of $S$ have been processed, the largest length of a rollercoaster is the largest value $l$ for which $R(w,3_+)[l]$ or $R(w,3_{+}^\prime)[l]$ is not empty; the corresponding rollercoaster can also be retrieved from arrays $R(w,h)$, by keeping the history of the way the elements of these arrays were computed, and then rolling back the computation. 
	\vspace{5pt}
	
	{\noindent\bf A Longest Rollercoaster in Permutations:}
	Here we consider a special case where our input sequence $S$ consists of $n$ distinct integers, each of which can be represented using at most $c$ memory words for some constant $c\geqslant 1$, in a RAM model with logarithmic word size. In linear time, we can sort $S$, using Radix Sort, and then hash it to a permutation of $\{1,\ldots,n\}$. This reduces the problem to finding a longest rollercoaster in a permutation of $\{1,\ldots,n\}$. The longest increasing subsequence of such a sequence can be computed in $O(n \log \log n)$ time by using a van Emde Boas tree \cite{Boas75}, which supports predecessor search and updates in $O(\log \log n)$ time.\footnote{We note that a longest increasing subsequence of a permutation can also be computed in $O(n\log\log k)$ time (see \cite{Crochemore2010}) where $k$ is the largest length of an increasing sequence, which is $\Omega(\sqrt{n})$. However, in our case, the largest length of a rollercoaster is $\Omega(n)$.} To compute a longest rollercoaster in the same time, we need a data structures that supports $\FindMax$ and $\Update$ in permutations in $O(\log\log n)$ time. In Appendix~\ref{appB} we show how to obtain such a data structure by using van Emde Boas trees combined with some other structures.
	
	\begin{lemma}\label{FindMax}
		Let $A$ be an array with $n$ elements from the set $\{0,1,\ldots,n\}$ such that each non-zero number occurs at most once in $A$. We can construct, in linear time, a data structure that performs $\FindMax$ and $\Update$ operations in $O(\log\log n)$ amortized time.
	\end{lemma}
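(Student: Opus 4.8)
The plan is to reduce both $\FindMax$ and $\Update$ to predecessor/successor operations on a single ``staircase'' and to maintain that staircase under a charging argument. I treat the version ``largest $l$ with $A[l]>S[i]$''; the version with $A[l]<S[i]$ is obtained by negating values. Since the nonzero entries of $A$ are distinct, I also keep the inverse map sending each present value to its index. Now observe that $\FindMax(A,x)=\max\{\,l : A[l]>x\,\}$ is the rightmost occupied index whose value exceeds $x$, and that this index is necessarily a \emph{suffix maximum} of the value sequence: everything to its right is $\le x$, hence strictly smaller than it. So it suffices to maintain the set $\Sigma$ of suffix-maximum indices. Ordered by increasing index, the values along $\Sigma$ strictly decrease, so among the staircase elements with value exceeding $x$ (a prefix, in index order) the answer is the rightmost one, i.e.\ the one of \emph{smallest} value still exceeding $x$. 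Thus a query is exactly a successor query for $x$ among the staircase values, followed by one index lookup.

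Concretely I would keep a van Emde Boas tree on the staircase \emph{values} (universe $\{1,\dots,n\}$), which answers the successor query in $O(\log\log n)$; a second van Emde Boas tree (or a doubly linked list in index order) on the staircase \emph{indices}, used for local navigation; and plain arrays realizing the index$\leftrightarrow$value correspondence in $O(1)$. Each primitive then costs $O(\log\log n)$, and a $\FindMax$ is one successor plus one lookup.

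The real work, and the step I expect to be the main obstacle, is $\Update$ together with its amortized analysis. To process $\Update(A,l,v)$ I first find, via index-successor in $\Sigma$, the first staircase index to the right of $l$; its value equals the maximum value lying to the right of $l$, so comparing $v$ against it decides whether $l$ is now a suffix maximum. If it is, I insert $l$ into $\Sigma$ and then delete the contiguous block of staircase indices immediately to its left whose values have become dominated by $v$. A single update can evict many staircase elements, so the key is to bound the \emph{total} eviction cost. I would charge each eviction to the insertion that originally placed that element on $\Sigma$, and argue that an element joins $\Sigma$ \emph{only} when its own value is updated. This last claim is exactly where the structure of the access pattern must be used: in each array $R(w,h)$ the stored values move monotonically (only ever decreasing, or only ever increasing) and there are no genuine deletions, so an evicted element can never be re-exposed by a right neighbour ``moving back'' across it --- it can rejoin $\Sigma$ only through an update at its own index. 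Hence the number of insertions into $\Sigma$ is at most the number of $\Update$ operations, the number of evictions is at most the number of insertions, and the amortized cost per operation is $O(\log\log n)$.

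Finally, the linear-time construction is routine: the values lie in $\{1,\dots,n\}$, so I counting-sort them and build the van Emde Boas trees in $O(n)$, and I compute $\Sigma$ by a single right-to-left scan that maintains the running maximum. The mirror image of the whole construction (suffix \emph{minima}) handles the ``$A[l]<S[i]$'' queries and the decreasing-value arrays; I note that the two ``primed'' arrays, which are read against the direction in which their values move, are handled by additionally maintaining the mirrored staircase and invoking the symmetric charging argument. I expect the subtlety of that last point --- precisely formulating the monotonicity invariant so that the ``joins $\Sigma$ only via its own update'' claim holds for every array --- to be the delicate part of the write-up.
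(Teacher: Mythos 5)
Your proposal is correct and follows essentially the same route as the paper: maintain the staircase of suffix maxima in a van Emde Boas tree so that $\FindMax$ becomes a single successor query, and amortize the cost of staircase evictions during $\Update$ against insertions, of which there is at most one per update. The only organizational difference is that the paper routes the query through an inverse array $B$ (a successor search on values followed by a suffix-maximum query on positions) while you query the staircase values directly, and the caveat you flag at the end---that a non-monotone update could re-expose previously evicted staircase elements---is a genuine subtlety that the paper's own write-up glosses over, so your explicit appeal to the monotone access pattern of the $R(w,h)$ arrays is, if anything, the more careful treatment.
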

	
	With Lemma~\ref{FindMax} in hand, we can compute a longest rollercoaster in $S$ in $O(n \log\log n)$ time. We note that this algorithm can also compute a longest increasing subsequence by maintaining only the array $R(\inc,3_+)$. 
	
	\subsection{Counting Rollercoaster Permutations}
	
	In this section we estimate the number $r(n)$ of permutations of
	$\{ 1,2, \ldots, n \}$ that are rollercoasters.  A brief table
	follows:
	
	\begin{center}
		\setlength\tabcolsep{4.5pt}
		\begin{tabular}{c|cccccccccccccc}
			$n$ & 1 & 2 & 3 & 4 & 5 & 6 & 7 & 8 & 9 & 10 & 11 & 12 & 13 & 14 \\
			\hline
			$r(n)$ & 1 & 0 & 2 & 2 & 14 & 42 & 244 & 1208 & 7930 & 52710 & 40580 & 3310702 &29742388 & 285103536 
		\end{tabular}
	\end{center}
	
	\noindent This is sequence \seqnum{A277556} in the {\it On-Line Encyclopedia
		of Integer Sequences} \cite{oeis}.
	
	The first step is to rephrase the condition that a permutation is a 
	rollercoaster in the language of ascents and descents.  Given a
	length-$n$ permutation $\pi = \pi_1 \pi_2 \cdots \pi_n$, its 
	{\it descent word} $u(\pi)$ is defined to be $u_1 u_2 \cdots u_{n-1}$
	where $u_i = {\bf a}$ if $\pi_i < \pi_{i+1}$ and ${\bf b}$ otherwise.
	The set of all descent words for rollercoaster permutations is
	therefore given by the expression
	$$ ({\bf bbb}^* + \epsilon)({\bf aaa}^* \, {\bf bbb}^*)({\bf aaa}^* + \epsilon),$$
	which specifies that every increasing run and every decreasing run must contains at least three elements. Since this description is a regular expression,
	one can, in principle,
	obtain the asymptotic behavior of $r(n)$ using the techniques
	of \cite{Basset:2016}, but the calculations appear to be formidable.
	
	Instead, 
	we follow the approach of Ehrenborg and Jung \cite{Ehrenborg&Jung:2012}.
	This is based on specifying sets of permutations through pattern avoidance.
	We say a word $w$ {\it avoids\/} a set of
	words $S$ if no contiguous subword of $w$
	belongs to $S$.
	Although rollercoasters are not specifiable in terms of a finite
	set of avoidable patterns, they ``almost are''.  Consider the
	patterns $\{ {\bf aba}, {\bf bab}\}$.   Every descent word of a
	rollercoaster must avoid both these patterns, and every word avoiding
	these patterns that {\it also\/}
	begins and ends with either $\bf aa$ or $\bf bb$
	is the descent word of some rollercoaster.   Let $s(n)$ be the number
	of permutations of length $n$ whose descent word avoids 
	$\{ {\bf aba}, {\bf bab}\}$.  Then $r(n) = \Theta(s(n))$.
	From \cite[Prop.~5.2]{Ehrenborg&Jung:2012} we know that
	$s(n)\sim c \cdot n! \cdot \lambda^{n-3}$
	where $\lambda \doteq 0.6869765032\cdots$ is the root of a
	certain equation.  It follows that
	$r(n) \sim c' \cdot n! \cdot \lambda^{n-3}$
	where $c'$ is a constant, approximately $0.204$.

\section{Caterpillars}
\label{caterpillar-section}

In this section we study the problem of drawing a top-view caterpillar, with L-shaped edges, on a set of points in the plane that is in general orthogonal position. Recall that a top-view caterpillar is an ordered caterpillar of degree 4 such that the two leaves adjacent to each vertex lie on opposite sides of the spine; see Figure~\ref{map-fig}(b) for an example.
The best known upper bound on the number of required points for a planar L-shaped drawing of every $n$-vertex top-view caterpillar is $O(n\log n)$; this bound is due to Biedl~\etal~\cite{Biedl2017}. 
We use Theorem~\ref{rollercoaster3-thr} and improve this bound to $\frac{25}{3}n\changed{+O(1)}$.

In every planar L-shaped drawing of a top-view caterpillar, every node of the spine, except for the two endpoints, must have its two incident spine edges aligned either horizontally or vertically. Such a drawing of the spine (which is essentially a path) is called a {\em straight-through} drawing. It has been proved in \cite{Biedl2017} that every $n$-vertex path has an $x$-monotone straight-through drawing on any set of at least $c\cdot n \log n$ points, for some constant $c$. The following theorem improves this bound.

\begin{theorem}
	\label{straight-through-thr}
	Any path of $n$ vertices has an $x$-monotone straight-through drawing on any set of at least $3n{-}3$ points in the plane that is in general orthogonal position.
\end{theorem}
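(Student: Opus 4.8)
The plan is to reduce the drawing problem to a purely combinatorial statement about rollercoasters and then invoke Theorem~\ref{rollercoaster3-thr}. I first claim that an $n$-vertex path has an $x$-monotone straight-through drawing on points $s_1,\dots,s_n$, indexed by increasing $x$-coordinate, precisely when the $y$-coordinate sequence $y(s_1),\dots,y(s_n)$ is a rollercoaster all of whose runs have odd length, with the possible exception of the first and the last run. To establish the direction I actually need (existence of a drawing from such a sequence), observe that each L-shaped edge $s_j s_{j+1}$ either leaves $s_j$ horizontally (and must then enter $s_{j+1}$ vertically) or leaves $s_j$ vertically (and enters $s_{j+1}$ horizontally). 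The straight-through condition at each internal vertex equates the incoming and outgoing orientations there, which forces the departure orientations to \emph{alternate} along the path; hence the whole drawing is determined once the orientation at $s_1$ is fixed. With the alternation fixed, each edge lies in its own vertical strip, so two edges can meet only at a shared vertex: the two edges meeting at a \emph{horizontal} vertex pass straight through for free, while the two edges meeting at a \emph{vertical} vertex $s_j$ form a single non-degenerate vertical segment exactly when $y(s_{j-1}),y(s_j),y(s_{j+1})$ is monotone. Since the vertical vertices occupy every other position, the drawing exists iff no turning point of the $y$-sequence lands on a vertical position; by choosing the orientation at $s_1$ we may place the vertical positions in whichever parity class we like, and a parity class avoiding all turning points exists exactly when every internal run is odd.

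Granting the reduction, it suffices to select, from the given set of $N\geqslant 3n-3$ points in general orthogonal position (which guarantees distinct $x$- and $y$-coordinates, so the $x$-order is well defined and the $y$-values are distinct reals), a subset of $n$ points whose $y$-sequence is a rollercoaster with odd internal runs. By Theorem~\ref{rollercoaster3-thr}, the $N$ points contain a rollercoaster $R$ of length $L\geqslant \lceil N/2\rceil \geqslant (3n-3)/2$, and every run of $R$ already has at least three points.

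It remains to repair the parities of the internal runs by a small deletion. To turn an even-length internal run into an odd one I delete a single interior point of it; this keeps the run monotone and of length at least $3$, and it does not touch the shared turning points, so the neighbouring runs are unaffected. Let $E$ be the number of even-length internal runs, so I delete exactly $E$ points. Writing the run lengths as $\ell_1,\ell_2,\dots$, where consecutive runs share one endpoint, the total length is $L=1+\sum_i(\ell_i-1)$, and each even internal run contributes $\ell_i-1\geqslant 3$ to the sum, whence $L\geqslant 3E+1$, i.e.\ $E\leqslant (L-1)/3$. Therefore the number of surviving points is at least $L-E\geqslant (2L+1)/3\geqslant (3n-2)/3=n-\tfrac{2}{3}$, and being an integer it is at least $n$. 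The surviving points form a rollercoaster with odd internal runs; taking a prefix of $n$ of them (the truncated last run is allowed to have any length) gives the required $n$-point pattern, hence the desired straight-through drawing.

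The steps that need the most care are the reduction in the first paragraph -- verifying that the alternating orientations are genuinely forced and that the resulting edges never cross -- and the deletion count in the last paragraph. The latter is the source of the exact constant: the inequality $(2L+1)/3\geqslant n$ together with $L\geqslant (3n-3)/2$ is precisely what pins the requirement down to $3n-3$ points (the worst case being a rollercoaster whose internal runs all have length $4$), so a looser trimming argument would only yield a larger multiple of $n$.
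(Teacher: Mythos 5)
Your proof is correct and follows essentially the same route as the paper's: extract a long rollercoaster, delete one interior point from each even-length internal run to force odd internal runs, and count what survives. The only real difference is where the length bound comes from: the paper reaches back into the proof of Theorem~\ref{rollercoaster3-thr} to get two pseudo-rollercoasters of total length at least $3n-1$ (so the longer has length at least $\tfrac{3}{2}n-\tfrac{1}{2}$, at the price of possibly short first and last runs), whereas you use only the stated bound $\lceil (3n-3)/2\rceil$ and recover the missing unit from the fact that every run of a genuine rollercoaster has length at least $3$ together with integrality of the final count --- both land exactly at $n$ surviving points. Your explicit verification of the reduction to odd internal runs (forced alternation of edge orientations, planarity within vertical strips), which the paper leaves as an observation about Figure~\ref{partition-fig}, is also sound.
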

\begin{proof}
	Fix an arbitrary set of $3n{-}3$ points.
	As in the proof of Theorem~\ref{rollercoaster3-thr}, 
	find two pseudo-rollercoasters that together cover all but the last point
	and that both contain the first point.
	Append the last point to both sets; we hence obtain two subsequences 
	$R_1,R_2$ with $|R_1|+|R_2|\geq 3n{-}1$ and for which all but the first and
	last run have length at least 3.  
	
	We may assume $|R_1|\geq \tfrac{3}{2}n-\tfrac{1}{2}$, and will find the 
	straight-through
	drawing within it.  To do so, consider any run $r$ of $R_1$ that is neither the
	first nor the last run, and that has even length (hence length at least 4).
	By removing from $r$ one point that is not shared with an adjacent run,
	we turn it into a run of odd length.  Let $R'$ be the subsequence
	that results after applying this to every such run of $R_1$; then $R'$
	satisfies that every run except the first and last one has odd length.  
	Observe that we can find an $x$-monotone straight-through drawing of length 
	$|R'|$ on this, see e.g.~the black path in Figure~\ref{partition-fig} that is 
	drawn on the black points. 
	
	It remains to argue that $|R'|\geq n$.  Let $r_1,\dots,r_\ell$ be the
	runs of $R_1$, and assign to each run $r_i$ all but the last point of $r_i$
	(the last point of $r_i$ is counted with $r_{i+1}$, or not counted at all
	if $i=\ell$).  Therefore
	$|r_1|+\dots+|r_\ell|=|R_1|{-}1\ge \tfrac{3}{2}n-\tfrac{3}{2}$. 
	For each $r_i$ with $2\leq i\leq \ell{-}1$, we remove a point
	only if $|r_i|\geq 3$, hence we keep at least $\tfrac{2}{3}|r_i|$ points.
	Therefore $|R'|\geq |r_1|+|r_\ell|+\sum_{2\leq i\leq \ell{-}1} \tfrac{2}{3}|r_i|
	\geq \sum_{1\leq i\leq \ell} \tfrac{2}{3}|r_i|
	+ \tfrac{1}{3}(|r_1|+|r_\ell|) \geq \tfrac{2}{3} \cdot (\tfrac{3}{2}n-\tfrac{3}{2})+ \tfrac{1}{3}(2+1)=n$ as desired.
	%
\end{proof}

To draw top-view caterpillars, we essentially use 
Theorem~\ref{straight-through-thr} and place the spine on
the resulting straight-through $x$-monotone path. 
But we will get a slightly better factor
if we analyze the number of points directly. 

\begin{theorem}
	\label{caterpillar-thr-new}
	\label{caterpillar-thr}
	Any top-view caterpillar of $n$ vertices has a planar L-shaped drawing on any set of $\tfrac{25}{3}(n{+}4)$ points in the plane that is in general orthogonal position.
\end{theorem}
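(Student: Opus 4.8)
The plan is to reuse the ``$5$-set'' scheme that underlies the coefficient $\tfrac{25}{3}$, but to account for the points directly rather than invoking Theorem~\ref{straight-through-thr} as a black box: that theorem spends $3$ points per spine vertex (its run-parity trimming is charged at the loose rate $\tfrac23$), and using it verbatim would only yield a bound of order $10n$. By recombining the internals more tightly — the $\lceil n/2\rceil$ rollercoaster of Theorem~\ref{rollercoaster3-thr} together with the sharper ``one deletion per five points'' trimming — one recovers the factor $\tfrac{25}{3}$. First I would record that the spine of an $n$-vertex top-view caterpillar is a path on $s=\tfrac{n}{3}+O(1)$ vertices, each interior one carrying two leaves on opposite sides of the spine; the $+4$ in the statement is precisely what absorbs these boundary vertices and the rounding in the partition below.

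Next I would partition the given point set $P$, by vertical lines, into $\tfrac{5}{3}(n{+}4)$ consecutive blocks of five points each (the \emph{$5$-sets}), selecting from each block the point of median $y$-coordinate as its \emph{primary} point and calling the remaining four its \emph{reserved} points. Writing $P'$ for the set of $\tfrac{5}{3}(n{+}4)$ primary points, I would run the construction of Theorem~\ref{rollercoaster3-thr} on $P'$ to obtain a rollercoaster of length at least $\tfrac12|P'|=\tfrac{5}{6}(n{+}4)$, and then apply the same run-parity trimming as in Theorem~\ref{straight-through-thr} — deleting one non-shared point from every even interior run — to obtain a straight-through $x$-monotone path $\delta$ in which every run except the first and last has odd length. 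Analyzed tightly (each deletion is charged against five consecutive surviving points, the worst case being alternating runs of lengths $4$ and $3$), this trimming costs at most a $\tfrac15$ fraction, so $\delta$ has at least $n':=\tfrac23(n{+}4)$ vertices. The advantage of the direct count is that the reserved points come for free from the very same $5$-sets, so no separate budget is spent on them.

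The heart of the argument is the leaf-attachment phase. I would sweep the vertices $s_1,\dots,s_{n'}$ of $\delta$ from left to right, maintaining the invariant that the portion of the drawing left of the current sweep position is a valid planar L-shaped drawing, and at each step attach the two leaves of a spine vertex using reserved points drawn from the appropriate $5$-sets. The case analysis branches on the local up/down pattern of three or four consecutive vertices (whether $s_{i+2}$ lies below or above $s_{i+1}$, and in the latter case whether $s_{i+4}$ lies below or above $s_{i+3}$); in each branch one either promotes a reserved point to a new spine vertex and demotes the old primary to a leaf, or routes two reserved leaves confined to a single vertical strip, and one checks — using that the reserved points of a $5$-set share the vertical strip of their primary — that the new L-shaped edges cross nothing and that the spine stays a straight-through $x$-monotone path. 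The accounting fact to extract is that each iteration converts at least half of the scanned vertices of $\delta$ into spine vertices carrying their full complement of leaves, so the caterpillar spine produced has at least $n'/2=\tfrac13(n{+}4)\ge s$ vertices, which finishes the count.

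I expect the main obstacle to be the geometry of the case analysis rather than the counting: one must verify in \emph{every} configuration both that the promoted or removed vertices leave the spine straight-through and $x$-monotone, and that each of the four leaf edges per processed block is an L-shape living in a vertical strip disjoint from everything drawn so far, while \emph{simultaneously} guaranteeing the $\tfrac12$ usage ratio needed for the count. Once planarity in each branch and the ratio are both established, the chain of inequalities $\tfrac13(n{+}4)=\tfrac{n}{3}+\tfrac43\ge \tfrac{n}{3}+O(1)=s$ closes the proof, the slack $\tfrac43$ and the additive $+4$ exactly covering the constant-size boundary effects.
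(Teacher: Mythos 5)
Your overall architecture (5-sets, a rollercoaster on the mid-points via Theorem~\ref{rollercoaster3-thr}, iterative leaf attachment from reserved points, and a usage ratio closing the count at $\tfrac13(n{+}4)$ spine vertices) matches the paper's, but the step where you convert the rollercoaster into an odd-run straight-through path has a genuine gap, and the paper's proof is organized specifically to avoid that step. You claim that deleting one non-shared point from every even interior run costs at most a $\tfrac15$ fraction, the worst case being alternating runs of lengths $4$ and $3$. That is not the worst case: if every run has length exactly $4$, then $\ell$ runs contain only $4\ell-(\ell-1)=3\ell+1$ points (consecutive runs share a point), and a non-shared deletion flips the parity of exactly one run, so you need about $\ell$ deletions --- roughly one point in three, not one in five. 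The $\tfrac15$ figure would require a single deletion to repair two adjacent even runs at once, say by deleting their shared extremal point; but that never works: two adjacent runs of length $4$ comprise $7$ points, and after deleting the shared peak the remaining $6$ points split into two runs whose lengths sum to $7$, so exactly one of them is still even. With the honest $\tfrac23$ retention rate (which is what Theorem~\ref{straight-through-thr} actually charges), your path $\delta$ is only guaranteed $\tfrac23\cdot\tfrac56(n{+}4)=\tfrac59(n{+}4)$ vertices, half of which is $\tfrac{5}{18}(n{+}4)<\tfrac13(n{+}4)$, and the count no longer closes.

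The paper sidesteps this entirely: it never trims the rollercoaster to odd runs. It places spine vertices directly on the rollercoaster $R$, maintaining the invariant that each even-indexed spine vertex lies in the middle of a run with its backward spine edge attaching vertically. The case analysis is driven not by the up/down pattern of the next few points but by whether the current run ends within the next four points: if so ($j\leq i+4$), a reserved point above or below the run's endpoint is promoted to the next spine vertex, and at most five 5-sets are consumed for two spine vertices; if not ($j>i+4$), the algorithm advances four points within the run, consuming four 5-sets for two spine vertices. This gives a guaranteed $\tfrac25$ usage ratio of the parsed 5-sets, and $\tfrac25\cdot\tfrac56(n{+}4)=\tfrac13(n{+}4)$ yields the required spine length with no parity repair at all. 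To salvage your route you would need either a correct odd-run trimming bound strictly better than $\tfrac23$ retention (ruled out by the all-length-$4$ example) or a constant larger than $\tfrac{25}{3}$.
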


\begin{wrapfigure}{r}{2.5in} 
	\vspace{-8pt} 
	\centering
	\includegraphics[width=2.3in]{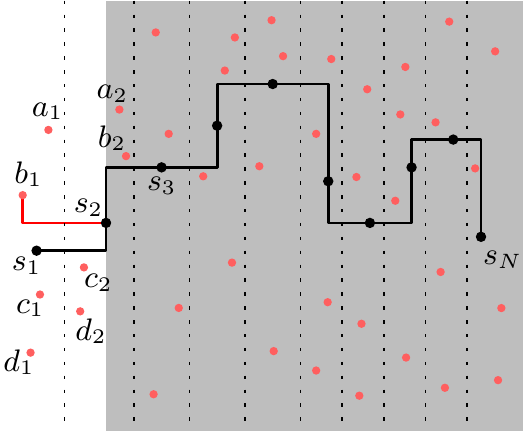} 
	\caption{An $x$-monotone straight-through drawing of an $n'$-vertex path. Red (lighter shade) points are reserved.}
	\label{partition-fig}
	\vspace{-5pt} 
\end{wrapfigure}
\noindent{\em{Proof.}}
Fix any $\tfrac{25}{3}(n{+}4)$ points $P$ in general orthogonal position. 
Partition $P$, by vertical lines, 
into $\tfrac{5}{3}(n{+}4)$ sets, each of them containing five points. We call every such set a {\em 5-set}. Let $P'$ be the set of the mid-points (with respect to 
$y$-coordinates) of every 5-set.  
We have $|P'|\geq \tfrac{5}{3}(n{+}4)$,
so by Theorem~\ref{rollercoaster3-thr} it 
contains a rollercoaster $R$ of length at least $\tfrac{5}{6}(n{+}4)$. 

Let $s_1,\dots,s_N$  (for $N\geq \tfrac{5}{6}(n{+}4)$) be the points of $R$, 
ordered from left to right.
For every $i\in\{1,\dots,N\}$, consider the 5-set containing $s_i$
and let its other points be
$a_i,b_i,c_i,d_i$ from top to bottom; we call these the {\em reserved points}.
The main idea is to draw the spine of the caterpillar along $R$ and the leaves at reserved points, though we will deviate from this
occasionally.
Let the spine consist of vertices $v_1,v_2,\dots,v_\ell$, where $v_1$ and $v_\ell$ are leaves while $v_2,v_3,\dots$ are vertices of degree 4.  
We process the vertices in order along the spine, and maintain the following invariant:

\begin{invariant*}
	At time $k\geq 1$, vertex $v_{2k}$ is drawn on a point $s_i$ that is in the middle of a run of $R$.  Edge $(v_{2k},v_{2k-1})$
	attaches vertically at $v_{2k}$.  All vertices $v_1,\dots,v_{2k-1}$, all their incident leaves, and one incident leaf of $v_{2k}$ are
	drawn on points to the left of $s_i$.
\end{invariant*}

To initiate this process, we draw $v_1$ on $s_1$, $v_2$ on $s_2$, and one leaf incident to $s_2$ on $b_1$.  See 
Figure~\ref{partition-fig}. Clearly the invariant holds for $v_{2}$.   Now assume that vertex $v_{2k}$ has been placed
at $s_i$, and we want to place $v_{2k+1}$ and $v_{2k+2}$ next.  We know that $s_i$ is in the middle of some run of
$R$; up to symmetry we may assume that it is an ascending run.  Let $s_j$ be the last point of this run of $R$;
by the invariant $j>i$.  We distinguish cases:

\medskip\noindent{\bf Case 1:} $j\leq i+4$.
See Figure~\ref{spine1-fig3}(a).   
We will completely ignore the 5-sets containing $s_{i+1},\dots,s_{j-1}$. 
Recall that there are two reserved points $a_j$ and $b_j$ above $s_j$.  We place $v_{2k+1}$ at $b_j$ and $v_{2k+2}$
at $s_{j+1}$.  We connect leaves as follows:  The leaves incident to $v_{2k+1}$ are placed at $a_j$ and $s_j$.
To place one leaf each incident to $v_{2k}$ and $v_{2k+2}$, we use the two points $c_j$ and $d_j$, using the one
farther left for $v_{2k}$.  Clearly the invariant holds.

Observe that there are at most five 5-sets (corresponding to $s_{i+1},\dots,s_{j+1}$) that were parsed, and we have used two for placing spine-vertices.
Therefore, we have used at least $\tfrac{2}{5}$th of the parsed 5-sets.

\medskip\noindent{\bf Case 2:} $j> i+4$.
See Figure~\ref{spine1-fig3}(b). 
We ignore the reserved points corresponding to $s_{i+1}$ and $s_{i+3}$. 
We place $v_{2k+1}$ at $s_{i+2}$ and $v_{2k+2}$ at $s_{i+4}$.  Note that by case-assumption $s_{i+4}$ is {\em not}
the end of the run, so this satisfies the invariant. 
We connect, as leaves, $s_{i+1}$ to $v_{2k}$ (at $s_i$), and $s_{i+3}$ to $v_{2k+2}$ (at $s_{i+4}$). 
The two leaves of $v_{2k+1}$ can be placed in the 5-set of $s_{i+2}$.
We have used four 5-sets and placed two spine-vertices, and have therefore used half of the parsed 5-sets.

\begin{figure}[htb]
	\hspace*{\fill}
	\includegraphics[width=.35\columnwidth,page=1]{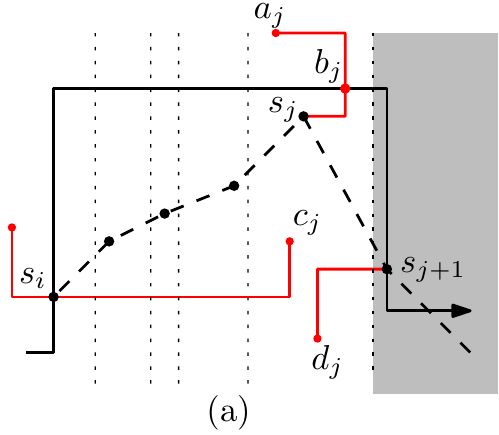}
	\hspace*{\fill}
	\includegraphics[width=.35\columnwidth,page=2]{fig/spine1_therese.pdf}
	\hspace*{\fill}
	\caption{Placing the next two spine-vertices. (a) $j\leq i+4$.  
		(b) $j>i+4$.  The dashed line indicates $R$, the solid line is the spine.}
	\label{spine1-fig3}
\end{figure}

This is the end of one iteration. In every iteration, we have used at least $\tfrac{2}{5}$th of the parsed 5-sets.
Since there were $\tfrac{5}{6}(n{+}4)$ 5-sets, we hence can place $\tfrac{1}{3}(n{+}4)$ spine-vertices.
Since the spine of every $n$-vertex top-view caterpillar has $\tfrac{1}{3}(n{+}4)$ vertices, our claim follows.
\qed \vspace{8pt}

The algorithm in the proof of Theorem~\ref{caterpillar-thr} runs in linear time, provided that the input points are given in sorted order, since we can find the rollercoaster in linear time and then do one scan of the points.


\bibliographystyle{abbrv}
\bibliography{Rollercoaster.bib}

\appendix 
\newpage

\section{Detailed Length Analysis}
\label{appA}
Let $\mathcal{R}_1$ and $\mathcal{R}_2$ be the two rollercoasters computed by our algorithm. Recall that $\mathcal{R}_1\cup \mathcal{R}_2$ contains all points of $P$, except possibly $p_1$ and $p_n$ which might be discarded in the first and last iterations, respectively. Thus the total length of $\mathcal{R}_1$ and $\mathcal{R}_2$ is at least $n-2$, and the length of the longer one is at least $\left\lceil\frac{n-2}{2}\right\rceil$. 
To get the length $\left\lceil\frac{n}{2}\right\rceil$ we revisit the first and last iterations; this includes some case analysis. For brevity, we call a rollercoaster of length at least $\left\lceil\frac{n}{2}\right\rceil$ a {\em suitable rollercoaster}. 
If one of $\mathcal{R}_1$ and $\mathcal{R}_2$ is empty, then the other one is suitable. Assume that none of them is empty, and thus, each contains at least three points (because they are valid rollercoasters). 
Observations~\ref{first-point-obs} and \ref{last-point-obs} follow from the intermediate and last iterations.



\changed{
	\begin{observation}
		\label{first-point-obs}
		If $p_1$ is not in $\mathcal{R}_1\cup \mathcal{R}_2$, then we removed $p_1$ from $R_1$ and $R_2$ in the final refinement at the end of the algorithm.  Thus the first run of $R_1$ is an increasing chain with two points, say $p_1, p_2$ without loss of generality, and the first run of $R_2$ is a decreasing chain with two points, say $p_1, p_3$.  This means that point $p_4$ must lie between $p_2$ and $p_3$ and must start the main case of the intermediate iteration, and this iteration must complete by finding a 3-ascent and 3-descent that share a common point. 
		(In particular, we cannot go directly to the final iteration otherwise one of $R_1$, $R_2$ would be longer.)
		\remove{
			Let $R_1$ be the pseudo-rollercoaster that---in the first iteration---was assumed to have an ascent ending in $p_1$, and let $R_2$ be the one that was assumed to have a descent ending in $p_1$.
			Notice that if the first run of $R_1$ is decreasing, then as described in an intermediate iteration, this run contains at least three points $p_{k''}$, $p_{k'}$, and $p_k$ (with $p_{k''}=a=p_1$). In this case we do not remove $p_1$ from $R_1$. Also if the first run of $R_2$ is increasing, then it contains at least three points $p_1=d$, $p_i$, and $p_{k'}$ or $p_{k''}$ or $p_{i+1}$. In this case we do not remove $p_1$ from $R_2$. Therefore, $p_1$ will be removed from both $R_1$ and $R_2$ if and only if the first run of $R_1$ is an increasing chain with two points and the first run of $R_2$ is a decreasing chain with two points. This configuration is depicted in Figure~\ref{first-and-last-fig}(b).
		}
	\end{observation}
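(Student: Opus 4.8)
The plan is to read the conclusions off the removal rule of the Final Refinement and then pin down the first intermediate iterations by a short case analysis, all under the standing assumption of Appendix~\ref{appA} that neither $\mathcal R_1$ nor $\mathcal R_2$ is empty. The starting point is that $p_1$ is the single point shared by $R_1$ and $R_2$ throughout the sweep, and that $p_1$ is discarded from $R_i$ in the Final Refinement precisely when the first run of $R_i$ has exactly two points. Hence $p_1\notin\mathcal R_i$ is equivalent to ``the first run of $R_i$ has length two'', so the hypothesis $p_1\notin\mathcal R_1\cup\mathcal R_2$ says exactly that \emph{both} first runs have length two. This already gives the first assertion and reduces the rest to identifying the executions that leave both first runs of length two. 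I will use freely that runs only grow during the sweep and are never shortened (apart from the single deletion of $p_1$), so once a first run reaches three points it stays long and $p_1$ survives.

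Next I would trace the points as they are swept. Since the First Iteration leaves $a=d=p_1$, the intermediate iteration that considers $p_2$ cannot enter the main case and must append $p_2$ directly; up to the stated reflection I take $p_2>p_1$, so $p_2$ joins the ascent-labelled rollercoaster, which I call $R_1$, fixing its first run as the ascent $(p_1,p_2)$. For $p_3$, with $a=p_2$ and $d=p_1$, I rule out the two alternatives to $p_3<p_1$. If $p_3>p_2$, then $p_3$ is appended to $R_1$ and its first run reaches three points, so $p_1$ survives in $\mathcal R_1$. If $p_1<p_3<p_2$, the main case fires while the descent-labelled rollercoaster $R_D=R_2$ is still the lone point $p_1=d$; whether this iteration completes or, failing to find a $3$-descent, diverts to the Last Iteration, the construction extends $R_2$ through $d=p_1$ by an ascending chain of length at least three (the algorithm's ``length at least three after $d$'' guarantee), so $R_2$ acquires an ascending first run of three or more points and $p_1$ survives in $\mathcal R_2$. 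Both possibilities contradict the reduction, so $p_3<p_1$, the point $p_3$ goes to $R_2$, and its first run is fixed as the descent $(p_1,p_3)$. This settles that one first run is the ascent $(p_1,p_2)$ and the other the descent $(p_1,p_3)$.

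Turning to $p_4$, with $a=p_2$ and $d=p_3$, the same dichotomy forces $p_3<p_4<p_2$: a value above $p_2$ would lengthen the first run of $R_1$, and a value below $p_3$ would lengthen the first run of $R_2$, in either case keeping $p_1$. Thus $p_4$ lies strictly between $p_2$ and $p_3$ and opens the main case. It remains to show this main case cannot divert to the Last Iteration. If it did, the Last Iteration would attach an ascending chain $A_1$ to $a=p_2$ to extend $R_1$: if $A_1$ carries any point beyond $p_2$, the first run of $R_1$ reaches three points and $p_1$ survives; if $A_1$ is trivial, $\mathcal R_1$ collapses to the single point $p_2$ and is empty. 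The first outcome contradicts the hypothesis and the second contradicts the non-emptiness assumption, so the main case must instead complete, producing a $3$-ascent and a $3$-descent that share a point, as the invariant requires.

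I expect the crux to be the exclusion of the main case at $p_3$: one must see that when the descent-labelled rollercoaster is still the lone point $p_1$, the main-case (or Last-Iteration) construction necessarily endows it with a long ascending first run rather than a two-point run, and dually that appending a descending chain to $R_1$ leaves its ascending first run $(p_1,p_2)$ untouched. Once this and the reduction to ``both first runs have length two'' are in place, the point-by-point pruning is routine, the only other delicate step being the appeal to non-emptiness when ruling out the detour to the Last Iteration at $p_4$.
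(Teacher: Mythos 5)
Your reduction of the hypothesis to ``both first runs have exactly two points'' is fine, and matches what the paper intends, but the point-by-point trace that follows has a genuine gap in the case $p_1<p_3<p_2$. There you invoke ``the algorithm's length-at-least-three-after-$d$ guarantee'' to conclude that $R_2$ acquires a 3-point ascending first run through $p_1$. That guarantee belongs only to the orientation in which $p_{i+1}$ lies above $p_i$; the algorithm's ``without loss of generality'' hides a mirrored branch in which $p_{i+1}$ lies \emph{below} $p_i$, the roles of ascent and descent swap, and the chain appended to $R_D$ after $d$ is the analogue of $A_1'$ (a descending chain $D_1'$), which can have exactly two points --- the ``length at least three'' promise then applies to the descending chain appended after $a$ instead. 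Concretely, take the sequence $(10,20,15,5,12,18,25,30)$: here $p_3=15$ lies strictly between $p_1=10$ and $p_2=20$, the main case fires already at $p_3$ with $p_4=5<p_3$, and the mirrored construction yields $R_A=(10,20,15,12)$ and $R_D=(10,5,12,18,25,30)$, both with two-point first runs, so $p_1$ is removed from both. This refutes your deduction that $p_3<p_1$ is forced and that the main case can only first fire at $p_4$; the second point of $R_2$'s two-point first run need not be the literal third input point.

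The consequences of the observation that are actually used downstream --- both first runs have exactly two points, and the first main-case iteration completes so that $\mathcal{R}_1$ and $\mathcal{R}_2$ share the point $p_{k'}$ --- do still hold in the configuration you missed, because the shared point $p_{k'}$ is produced by the main case in either orientation. But your argument does not cover that configuration, and the same blind spot infects your exclusion of the detour to the Last Iteration: its mirrored branch attaches the ``at least two points'' chain to $a$ rather than to $d$, so the dichotomy you state (either $R_1$'s first run grows or $\mathcal{R}_1$ collapses to one point) is not the one that arises there. To repair the proof you must carry both orientations of the intermediate and last iterations through every step of the case analysis, or argue the needed conclusions (two-point first runs and a shared point) in a way that is symmetric under reflection.
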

}

\begin{observation}
	\label{last-point-obs}
	\changed{If $p_n$ is not in $\mathcal{R}_1 \cup \mathcal{R}_2$, then the last run of one of them is an ascent ending at a point $u$ and the last run of the other one is a descent ending at a point $v$ such that $u$ lies above $v$, and $p_n$ lies between them.  
	}
	\remove{
		If $p_n$ does not belong to any of $\mathcal{R}_1$ and $\mathcal{R}_2$ then the last run of one of them is an ascent ending at a point $u$ and the last run of the other one is a descent ending at a point $v$ such that $u$ lies above $v$, and $p_n$ lies between them. }
\end{observation}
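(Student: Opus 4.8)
The plan is to trace the algorithm of Section~\ref{algorithm-section} and pin down exactly how $p_n$ can fail to land in $\mathcal{R}_1\cup\mathcal{R}_2$. First I would note that every point swept during an intermediate iteration is appended to at least one of the two current pseudo-rollercoasters, and that the final refinement discards only $p_1$ (and only from a first run). Hence the sole remaining possibility is that $p_n$ was discarded during the last iteration, and reading that iteration shows this occurs precisely in its branch where $p_n$ is the unique leftover point, i.e.\ the last iteration begins with $p_i=p_n$ and $i=n$.

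Next I would examine the algorithm's state at the instant the last iteration is reached with $p_i=p_n$. By the invariant maintained throughout, we then hold two pseudo-rollercoasters: one, say $R_A$, whose last run is an ascent ending at its rightmost point $a$, and the other, $R_D$, whose last run is a descent ending at its rightmost point $d$. This branch is entered from an intermediate iteration only after it has been established that $p_n$ lies below $a$ and above $d$ (otherwise $p_n$ could be appended to $R_A$ or to $R_D$ and the step would be an ordinary intermediate one) and that there is no further point to consider. Consequently $d$ lies below $p_n$, which lies below $a$; in particular $a$ lies above $d$, and $p_n$ lies between $a$ and $d$ in $y$-coordinate.

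Finally I would verify that passing from the pseudo-rollercoasters to $\mathcal{R}_1,\mathcal{R}_2$ leaves the endpoints of the two last runs unchanged. The only modification in the final refinement is a possible deletion of $p_1$ from a first run of length two; since we are in the case where neither $\mathcal{R}_1$ nor $\mathcal{R}_2$ is empty, each has at least three points, so whenever its first run has length two it also has a second run, and the deletion touches only that first run. Taking $u=a$ and $v=d$ then gives the statement: the last run of one of $\mathcal{R}_1,\mathcal{R}_2$ is an ascent ending at $u$, the last run of the other is a descent ending at $v$, $u$ lies above $v$, and $p_n$ lies between them.

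The step I expect to be the main obstacle is the bookkeeping in the first paragraph---ruling out every other route by which $p_n$ could be dropped. In particular I must confirm that when the last iteration starts with more than one leftover point ($i\neq n$), all of $p_i,\dots,p_n$ (and thus $p_n$) are placed into the two output chains, so that the $i=n$ branch is genuinely the unique source of a missing $p_n$; once that is settled, the remainder is a direct reading of the invariant and of the refinement step.
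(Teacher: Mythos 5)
Your proposal is correct and follows essentially the same route as the paper, which simply asserts that this observation "follows from the intermediate and last iterations": you correctly identify the $i=n$ discard branch of the last iteration as the only way $p_n$ can be lost, read off from the invariant that $p_n$ then lies strictly between the ascent endpoint $a$ and the descent endpoint $d$, and check that the final refinement (which only ever deletes $p_1$ from a length-two first run) cannot alter the last runs. This is exactly the justification the paper leaves implicit.
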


\changed{Depending on whether $p_1$ is missing from $\mathcal{R}_1\cup \mathcal{R}_2$ or $p_n$ is missing from $\mathcal{R}_1\cup \mathcal{R}_2$, or both, we consider three cases.
	
	\begin{itemize}
		\item $p_1 \not\in \mathcal{R}_1\cup \mathcal{R}_2$, $p_n \in \mathcal{R}_1\cup \mathcal{R}_2$.
		By Observation~\ref{first-point-obs} there is a common point in $\mathcal{R}_1$ and $\mathcal{R}_2$.  
		Then $|\mathcal{R}_1|+|\mathcal{R}_2|\geqslant n$ and we are done.
		
		\item $p_1 \in \mathcal{R}_1\cup \mathcal{R}_2$, $p_n \not\in \mathcal{R}_1\cup \mathcal{R}_2$.
		We apply Observation~\ref{last-point-obs}.  Suppose the last run of  $\mathcal{R}_1$ is a descent and the last run of $\mathcal{R}_2$ is an ascent.  Then at the end of the algorithm the last run of $R_1$ is a descent and the last run of $R_2$ is an ascent and, by the invariant,  they must have a common point.  If that common point is not $p_1$ then $|\mathcal{R}_1|+|\mathcal{R}_2|\geqslant n$ and we are done.  So suppose the common point is $p_1$.   The only danger is that we might remove $p_1$ from one of them, say $R_1$.  But this would imply that the last and only run of $\mathcal{R}_1$ has length less than 3, which is impossible.
		Thus $p_1$ is common to $\mathcal{R}_1$ and $\mathcal{R}_2$, so $|\mathcal{R}_1|+|\mathcal{R}_2|\geqslant n$ and we are done. 
		
		\item $p_1 \not\in \mathcal{R}_1\cup \mathcal{R}_2$, $p_n \not\in \mathcal{R}_1\cup \mathcal{R}_2$.
		By Observation~\ref{first-point-obs} there is a common point $p$ in $\mathcal{R}_1$ and $\mathcal{R}_2$. 
		If there were two or more common points then  $|\mathcal{R}_1|+|\mathcal{R}_2|\geqslant n$ and we would be done.
		Thus we may assume that there is only one common point $p$, which means that the main case of the intermediate iteration only happened once.  
		By Observation~\ref{last-point-obs} the last run of $\mathcal{R}_1$, say,  is a descent ending below $p_n$ and the last run of $\mathcal{R}_2$ is an ascent ending above $p_n$.  Because there is only one common point, these must be the only runs in $\mathcal{R}_1$ and $\mathcal{R}_2$ and we have the situation depicted in Figure~\ref{intersection-fig}.
		
		We have $|\mathcal{R}_1|+|\mathcal{R}_2|\geqslant n-1$, and thus the length of the longer rollercoaster is at least $\left\lceil\frac{n-1}{2}\right\rceil$, which is $\left\lceil\frac{n}{2}\right\rceil$ if $n$ is an even number. Assume that $n$ is an odd number, and thus $n\geqslant 9$. If one of $\mathcal{R}_1$ and $\mathcal{R}_2$ contains less than $\left\lceil\frac{n-1}{2}\right\rceil$ points, then the other one contains at least $\left\lceil\frac{n}{2}\right\rceil$ points. 
		Assume that each of $\mathcal{R}_1$ and $\mathcal{R}_2$ contains $\left\lceil\frac{n-1}{2}\right\rceil$ points, which is at least 4 points. 
		Let $\mathcal{R}'_1$ and $\mathcal{R}'_2$ be the sub-chains of $\mathcal{R}_1$ and $\mathcal{R}_2$, respectively, that are to the left of $p$. Similarly define sub-chains $\mathcal{R}''_1$ and $\mathcal{R}''_2$ to the right of $p$. 
		
		Assume without loss of generality that $|\mathcal{R}'_1| \leqslant |\mathcal{R}'_2|$.
		We consider two cases:
		
		\begin{enumerate}
			\item $|\mathcal{R}'_1| < |\mathcal{R}'_2|$.  Then $|\mathcal{R}''_1| > |\mathcal{R}''_2|$ and $|\mathcal{R}'_2| \geqslant 3$.
			Then the concatenation of $\mathcal{R}'_2$, $p$, and $\mathcal{R}''_1$ is a suitable rollercoaster.
			
			\item $|\mathcal{R}'_1| = |\mathcal{R}'_2|$.Then $|\mathcal{R}''_1| = |\mathcal{R}''_2|$.  By reversing the sequences if necessary, we may assume that $|\mathcal{R}'_1| \geqslant |\mathcal{R}''_1|$, and thus $|\mathcal{R}'_1| \geqslant 3$.  
			Let $\mathcal{R}_1=(a_1,a_2,a_3,a_4,\dots)$ and $\mathcal{R}_2=(b_1,b_2,b_3,b_4,\dots)$. 
			If $a_2$ is to the right of the vertical line through $b_2$, then as depicted in Figure~\ref{intersection-fig} the rollercoaster $(b_1,b_2,a_2,a_3,a_4,\dots)$ is suitable, otherwise $(a_1,a_2,b_2,b_3,b_4,\dots)$ is suitable.
			
		\end{enumerate}
		
	\end{itemize}
}

\remove{
	Depending on whether or not $p_1$ belongs to $\mathcal{R}_1$ or $\mathcal{R}_2$ we consider four cases.
	
	\begin{itemize}
		\item $p_1\in \mathcal{R}_1$ and $p_1\in \mathcal{R}_2$. Since $p_1$ belongs to both $\mathcal{R}_1$ and $\mathcal{R}_2$, and $p_n$ is the only point that may not be in either of $\mathcal{R}_1$ and $\mathcal{R}_2$, we have that $|\mathcal{R}_1|+|\mathcal{R}_2|\geqslant n$, and our result follows.
		\item $p_1\notin \mathcal{R}_1$ and $p_1\notin \mathcal{R}_2$. Recall---from the final refinement---that $\mathcal{R}_1$ is obtained from the pseudo-rollercoaster $R_1$ by removing $p_1$ and $\mathcal{R}_2$ is obtained from the pseudo-rollercoaster $R_2$ by removing $p_1$. Since $p_1$ is removed from both $R_1$ and $R_2$, as described at the end of final refinement, $R_1$ starts with an increasing chain with 2 points followed by a 3-descent, and $R_2$ starts with an decreasing chain with 2 points followed by a 3-ascent. In this setting, Observation~\ref{switch-obs} implies that $\mathcal{R}_1$ and $\mathcal{R}_2$ share a point $p\in\{p_2,\dots,p_{n-1}\}$. To this end, if $p_n\in \mathcal{R}_1\cup \mathcal{R}_2$, then we have $|\mathcal{R}_1|+|\mathcal{R}_2|\geqslant n$, and our result follows. Assume that $p_n\in \mathcal{R}_1\cup \mathcal{R}_2$. Observation~\ref{last-point-obs} and our assumption that $\mathcal{R}_1$ and $\mathcal{R}_2$ share at most one point, imply that the runs of $\mathcal{R}_1$ and $\mathcal{R}_2$ that contain $p$, wont switch from an ascent to a descent nor vice versa. This configuration is depicted in Figure~\ref{intersection-fig}(a).    
		
		To this end we have $|\mathcal{R}_1|+|\mathcal{R}_2|\geqslant n-1$, and thus the length of the longer rollercoaster is at least $\left\lceil\frac{n-1}{2}\right\rceil$, which is $\left\lceil\frac{n}{2}\right\rceil$ if $n$ is an even number. Assume that $n$ is an odd number, and thus $n\geqslant 9$. If one of $\mathcal{R}_1$ and $\mathcal{R}_2$ contains less than $\left\lceil\frac{n-1}{2}\right\rceil$ points, then the other one contains at least $\left\lceil\frac{n}{2}\right\rceil$ points. Assume that each of $\mathcal{R}_1$ and $\mathcal{R}_2$ contains $\left\lceil\frac{n-1}{2}\right\rceil$ points, which is at least 4 points. Let $\mathcal{R}_1=(a_1,a_2,a_3,a_4,\dots)$ and $\mathcal{R}_2=(b_1,b_2,b_3,b_4,\dots)$. Let $\mathcal{R}'_1$ and $\mathcal{R}'_2$ be the sub-chains of $\mathcal{R}_1$ and $\mathcal{R}_2$, respectively, that are to the left of $p$. Similarly define sub-chains $\mathcal{R}''_1$ and $\mathcal{R}''_2$ to the right of $p$. We consider three cases
		
		\begin{enumerate}
			\item $|\mathcal{R}'_1|\geqslant 2$ and $|\mathcal{R}'_2|\geqslant 2$. If $a_2$ is to the right of the vertical line through $b_2$, then as depicted in Figure~\ref{intersection-fig}(a) the rollercoaster $(b_1,b_2,a_2,a_3,a_4,\dots)$ is a suitable, otherwise $(a_1,a_2,b_2,b_3,b_4,\dots)$ is suitable.
			\item $|\mathcal{R}'_1|\geqslant 2$ and $|\mathcal{R}'_2|< 2$. In this case $|\mathcal{R}''_2|\geqslant 2$, moreover $|\mathcal{R}''_2|>|\mathcal{R}''_1|$. Thus, the concatenation of $\mathcal{R}'_1$, $p$, and $\mathcal{R}''_2$ is a suitable rollercoaster.
			\item $|\mathcal{R}'_1|< 2$ and $|\mathcal{R}'_2|< 2$. In this case  $|\mathcal{R}''_1|\geqslant 2$ and $|\mathcal{R}''_2|\geqslant 2$, and we obtain a suitable rollercoaster as in case 1.
		\end{enumerate}
		
		\item $p_1\in \mathcal{R}_1$ and $p_1\notin \mathcal{R}_2$. In this case $|\mathcal{R}_1|+|\mathcal{R}_2|\geqslant n-1$ assuming that $p_n\notin \mathcal{R}_1\cup \mathcal{R}_2$. Recall---from the final refinement---that $\mathcal{R}_1=R_1$ and $\mathcal{R}_2$ is obtained from $R_2$ by removing $p_1$. In this case, $\mathcal{R}_1$ starts with a 3-ascent. Moreover, $R_2$ starts with a decreasing chain with 2 points followed by a 3-ascent; this implies that $\mathcal{R}_2$ starts with a 3-ascent. Since $p_n\notin \mathcal{R}_1\cup \mathcal{R}_2$, Observation~\ref{last-point-obs} implies that the rightmost run of one rollercoaster has to switch from an ascent to a descent. If this switch happens, then by Observation~\ref{switch-obs} either the two rollercoasters share a point (Figure~\ref{intersection-fig}(b)), which implies that $|\mathcal{R}_1|+|\mathcal{R}_2|\geqslant n$, or $p_n$ will be added to $\mathcal{R}_1$ or $\mathcal{R}_2$, which contradicts our assumption that $p_n\notin \mathcal{R}_1\cup \mathcal{R}_2$.
		\item $p_1\notin \mathcal{R}_1$ and $p_1\in \mathcal{R}_2$. This case can be handled similar to the previous case.
	\end{itemize}
}

\begin{figure}[htb]
	\centering
	\includegraphics[width=.33\linewidth]{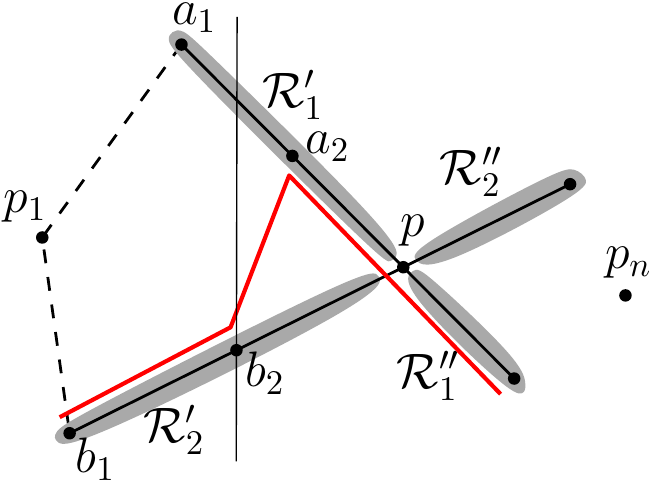}
	\caption{Illustration of the case where $p_1, p_n \notin \mathcal{R}_1  \cup \mathcal{R}_2$.}
	\label{intersection-fig}
\end{figure}

\section{Data Structure for FindMax and Update}
\label{appB}

Recall from Section~\ref{algorithms} that we have an array $A$ containing distinct elements from $\{1,\dots,n\}$, and we want to perform the following two operations on $A$ in $O(\log\log n)$ time:

\begin{itemize}
	\item $\FindMax(A,x)$: Find the largest index $l$ such that $A[l]>x$. 
	\item $\Update(A,l,x)$: Set $A[l]=x$.
\end{itemize}

\begin{lemma}\label{smq} 
	Let $B$ be an array with $n$ elements from the set $\{0,1,\ldots,n\}$ such that each non-zero number occurs at most once in $B$. We can construct, in linear time, a data structure that, 
	\changed{in $O(\log \log n)$ amortized time,  performs the operations Suffix Maximum Query (SMQ) and Update defined as follows:}
	
	\begin{itemize}
		\item $\SMQ(l)$: Return the position of the largest element in $B[l..n]$, and
		\item $\Update(B,l,x)$: Set $B[l]=x$, assuming that 
		$x \in \{1,\ldots,n\}$ is not currently in $B$.
	\end{itemize}
\end{lemma}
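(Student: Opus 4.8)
The plan is to reduce $\SMQ$ to a single successor query over the \emph{staircase} of right-to-left maxima of $B$, and to maintain this staircase under $\Update$ using a van Emde Boas tree over the position universe $\{1,\dots,n\}$. Call a position $j$ with $B[j]\neq 0$ a \emph{right-to-left maximum} if $B[j]>B[j']$ for every $j'>j$ with $B[j']\neq 0$, and let $\mathcal{S}$ be the set of such positions. Two easy observations drive the reduction. First, as the position increases the values at right-to-left maxima strictly decrease, so $\mathcal{S}$ is a ``staircase.'' Second, the position of the maximum of $B[l..n]$ is always a right-to-left maximum, and it is in fact the \emph{leftmost} element of $\mathcal{S}$ that is at least $l$: any earlier staircase element would carry a strictly larger value and contradict maximality. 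Hence $\SMQ(l)$ is exactly the successor of $l-1$ in $\mathcal{S}$. Storing $\mathcal{S}$ in a van Emde Boas tree over $\{1,\dots,n\}$ answers this successor query in $O(\log\log n)$ time and supports insertions and deletions of staircase positions in the same bound; the initial (empty, or bulk-loaded from a single right-to-left scan) structure is set up in linear time, which gives the claimed construction time.

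Next I would handle $\Update(B,l,x)$ by locally repairing $\mathcal{S}$. Writing $v$ for $\max(B[l+1..n])$, which is $B[\SMQ(l+1)]$ and therefore available in $O(\log\log n)$ time, the membership of $l$ in $\mathcal{S}$ flips according to whether $x$ exceeds $v$, and this is handled with one insertion or deletion. The remaining effect is confined to positions to the left of $l$: because only $\max(B[j+1..n])$ for $j<l$ can change, lowering $B[l]$ can \emph{promote} some left positions into $\mathcal{S}$, while raising $B[l]$ can \emph{demote} a contiguous block of staircase elements whose values now fall below $B[l]$. In either case I would walk the affected staircase neighbours one at a time, using predecessor queries in the van Emde Boas tree together with the stored values $B[\cdot]$, inserting or deleting each and stopping as soon as the staircase property is restored. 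Thus each $\Update$ costs $O(\log\log n)$ plus $O(\log\log n)$ for every staircase element it inserts or deletes.

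The main obstacle, and the crux of the argument, is the amortized analysis that bounds the \emph{total} number of staircase modifications over a sequence of $m$ operations by $O(n+m)$. The key structural fact is that the value stored in each nonempty cell only decreases over time, which is how $\Update$ is invoked by the longest-rollercoaster algorithm, where a cell's stored value only improves. Under this monotonicity, $\max(B[j+1..n])$ is non-increasing over time for every fixed $j$, so a position can leave $\mathcal{S}$ only as a direct result of an $\Update$ to its own cell. This charges every deletion to a distinct operation, bounding the number of deletions by $m$; and since $|\mathcal{S}|\le n$ at all times, the number of insertions exceeds the number of deletions by at most $n$. Summing, the total staircase work is $O((n+m)\log\log n)$, which with the per-query cost of $\SMQ$ yields the claimed $O(\log\log n)$ amortized bound. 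I expect the delicate part of the write-up to be the bookkeeping that identifies exactly which left neighbours to promote (or which block to demote), and the verification that the local repair touches only genuinely changed staircase entries, so that its cost is always chargeable to those changes.
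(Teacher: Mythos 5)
Your reduction of $\SMQ$ to a single successor query over the staircase of right-to-left maxima, stored in a van Emde Boas tree over the position universe, is exactly the paper's construction (the paper additionally materializes the staircase as a doubly linked list $L$ alongside the tree, but that is cosmetic), and your justification that $\SMQ(l)$ is the leftmost staircase position $\geq l$ matches the paper's. The divergence is in $\Update$ and, crucially, in the amortized accounting. The paper's repair rule is deliberately asymmetric: on $\Update(B,l,x)$ it inserts at most \emph{one} position into the staircase, namely $l$ itself (and only if $x$ beats the value at the next staircase position to the right), and then deletes the block of staircase positions to the left of $l$ whose values fall below $x$. Because each operation contributes at most one insertion, the total number of insertions over $m$ operations is at most $n+m$, the total number of deletions is bounded by the total number of insertions, and the $O(\log\log n)$ amortized bound follows with no hypothesis whatsoever on the update sequence.

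Your symmetric repair (promoting left neighbours when a value drops, demoting a block when it rises) is more thorough, but the charging argument you give for it is where the gap lies: you bound deletions by $m$ only under the assumption that the value stored in each nonempty cell only decreases over time. That assumption is not part of the lemma statement, which permits arbitrary updates subject only to $x$ being a value not currently in $B$. Without it, a position can oscillate in and out of $\mathcal{S}$ many times (alternately raise and lower some $B[l]$ using freed-up values, promoting and then demoting the same block of left neighbours on each swing), and neither of your charges --- deletions to updates of the same cell, insertions to deletions plus $n$ --- survives. Note also that the direction you assume is the opposite of the one that makes the paper's one-insertion-per-operation accounting work, and it is not evidently available in the paper's own application: in Lemma~\ref{FindMax} the array fed to this structure is the positional inverse $B[i]=j \Leftrightarrow A[j]=i$, whose cells are raised from $0$ to a position when a value first appears. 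So as written your argument establishes the lemma only for a restricted class of update sequences; to prove the statement as given you would either need to adopt the paper's asymmetric repair rule, whose amortization is unconditional, or else add the monotonicity hypothesis to the lemma and verify it at the call site.
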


\begin{proof}
	In linear time, we construct a doubly linked list $L$ (by indices of $B$) as follows. Set $M=n$ and initialize $L$ by $M$. For $i$ from $n-1$ to $1$ we do the following: if $B[i]>B[M]$, we add the index $i$ to the front of $L$ and then set $M=i$. 
	After this, $L$ contains indices $i_1,i_2,\ldots,i_k$, where $i_1<i_2<\dots<i_k=n$, and 
	\[ B[i_j] =\text{largest element in}
	\begin{cases}
	B[1 .. n]       &     \quad \text{for ~~} j=1\\
	B[(i_{j-1}+1) .. n]       & \quad \text{for all ~~} 1< j\leqslant k\\
	\end{cases}
	\]
	This means that every position that will be returned by $\SMQ$ queries on $B$ belongs to $L$. 
	More precisely, for every $l\in\{1,\dots,n\}$ we have $\SMQ(l)=i_j$ where $i_j$ is the smallest index in $L$ that is not smaller than $l$.
	
	Then in linear time we construct a van Emde Boas tree $T$ with the elements of $L$; this tree allows us to perform predecessor search, successor search, insert, and delete in $O(\log \log n)$ time. 
	To answer an $\SMQ(l)$ query, we report the smallest element $i_j$ in $L$ such that $i_j\geqslant l$. This can be done in $O(\log \log n)$ time by reporting the successor of $l$ in $T$. 
	
	We implement the $\Update(B,l,x)$ operation as follows. Notice that this operation will not change the answer to $\SMQ(t)$ queries for any $t>l$. Thus, we only need to update $L$ and $T$ such that $\SMQ(t)$ reports a correct position when $t\leqslant l$. We find in $L$ the largest index $i_h$ such that $i_h \leqslant l$; this can be done is done in $O(\log \log n)$ time by a predecessor search in $T$. If $B[i_{h+1}]>x$ then we don't do anything as the position of $x$ will not be reported by any $\SMQ$ query. Assume that $B[i_{h+1}]<t$. Then we insert $l$ in $L$, between $i_h$ and $i_{h+1}$, and also in $T$. Then we remove from $L$ and $T$ every index $i_j$, with $j\leqslant h$, for which $B[i_j]<x$. This can be done in $O(r \log \log n)$ where $r$ is the number of elements we that remove. Since every index of $B$ is inserted at most once in $L$, and also deleted once, the amortized running time of this operation is $O(\log \log n)$.
\end{proof}

A data structure similar to that of Lemma~\ref{smq} can be obtained for Prefix Maximum Queries $\PMQ(l)$, which ask for the position of the largest element inf $B[1..l]$.
Now we prove Lemma~\ref{FindMax}, which is restated below.

\vspace{10pt}
\noindent{\bf Lemma~\ref{FindMax}.} {\em 
	Let $A$ be an array with $n$ elements from the set $\{0,1,\ldots,n\}$ such that each non-zero number occurs at most once in $A$. We can construct, in linear time, a data structure that performs $\FindMax$ and $\Update$ operations in $O(\log\log n)$ amortized time.
}

\begin{proof}
	We construct a van Emde Boas tree $T$ for $A$ (we maintain links between every element of $T$ and its occurrence in $A$). We construct an array $B$ where $B[i]=j$ if and only if $A[j]=i$. Then we preprocess $B$ as in the proof of Lemma \ref{smq}.
	
	To answer a $\FindMax(A,x)$ query we proceed as follows.  We describe the case where we look for the largest index $l$ such that $A[l]>x$; the description of the case where $A[l]<x$ is analogous. We first compute the successor $A[j]$ of $x$ in $A$, i.e., the smallest element in $A$ that is greater than $x$. This can be done by a successor search in $T$. Since $A[j]\geqslant x$, we have $\FindMax(A,x)\geqslant j$ and $A[\FindMax(A,x)]\geqslant A[j]$. Hence, to retrieve $\FindMax(A,x)$ we need to find the largest index $k$, with $k\geqslant j$, such that $A[k]\geqslant A[j]$. This index $k$ is the maximum element in $B[A[j]..n]$, which is $\SMQ(B,A[j])$. By Lemma~\ref{smq} this can be done in $O(\log \log n)$ time. Therefore, in $O(\log \log n)$ time, we can answer $\FindMax(A,x)$.
	
	To perform $\Update(A,l,x)$, we set $A[l]=x$, update $T$ (by deleting the old value $A[l]$ and inserting $x$), and execute $\Update(B,x,l)$. These operations take $O(\log \log n)$ amortized time. 
\end{proof}

\end{document}